\journal{TBA}
\pgfplotsset{compat=newest}
\newcommand{\ourmodel}{PULSE\xspace}
\newcommand{\spread}{\mathcal{S}}
\newcommand{\imbalance}{\mathcal{I}}
\newcommand{\inventory}{\mathcal{Q}}
\newcommand{\cash}{\mathcal{C}}
\newcommand{\volatility}{\mathcal{V}}
\newcommand{\return}{\mathcal{R}}
\newcommand{\window}{\ell}
\newcommand{\valuetoxichorizon}{\mathcal{G}}
\newcommand{\toxichorizon}{toxicity horizon}
\newcommand{\cutoff}{\mathfrak{p}}
\newcommand{\cutoffprobability}{cutoff probability}
\newcommand{\N}{\mathbb{N}}
\newcommand{\clock}{\mathfrak{c}}
\newcommand*{\vertbar}{\rule[-1ex]{0.5pt}{2.5ex}}
\newcommand*{\horzbar}{\rule[.5ex]{2.5ex}{0.5pt}}
\newcommand{\E}{{\mathbb{E}}}
\newcommand{\yt}[1][t]{y_{#1}}
\newcommand{\xt}[1][t]{{\bf x}_{#1}}
\newcommand{\reals}{\mathbb{R}}
\newcommand{\normdist}[3]{\mathcal{N}\left( {#1} \,\vert\, {#2},\, {#3} \right)}
\newcommand{\data}{\mathcal{D}}
\newcommand{\plast}{{\bf w}}
\newcommand{\phidden}{{\boldsymbol\psi}}
\newcommand{\phiddensub}{{\bf z}}
\newcommand{\vdlast}[1][t]{{\phi_{#1}\left(\plast\right)}}
\newcommand{\vdhidden}[1][t]{{\varphi_{#1}\left(\phiddensub\right)}}
\newcommand{\covlast}{{\boldsymbol{\Sigma}}}
\newcommand{\covhidden}{{\boldsymbol{\Gamma}}}
\newcommand{\mhidden}{{\boldsymbol{\mu}}}
\newcommand{\mlast}{{\boldsymbol{\nu}}}
\newcommand{\dwarmup}{{{\cal D}_\text{warmup}}}
\newcommand{\normdistlast}{{\normdist{\plast}{\mlast}{\covlast}}}
\newcommand{\normdisthidden}{{\normdist{\phiddensub}{\mhidden}{\covhidden}}}
\newcommand{\mfT}{{\mathfrak{T}}}
\newcommand{\mcA}{{\mathcal{A}}}
\newcommand{\mfD}{{\mathfrak{D}}}
\newtheorem{theorem}{Theorem}
\newtheorem{corollary}{Corollary}
\newtheorem{proposition}{Proposition}
\newtheorem{definition}[theorem]{Definition}
\numberwithin{equation}{section}
\newcommand{\diff}{\mathrm{d}}
\DeclareMathOperator*{\argmax}{arg\,max}
\DeclareMathOperator*{\argmin}{arg\,min}
\title{\textbf{Detecting Toxic Flow}}
\author[label1,label2]{\'{A}lvaro Cartea}
\address[label1]{Mathematical Institute, University of Oxford, Oxford, UK}
\address[label2]{Oxford-Man Institute of Quantitative Finance, Oxford, UK}
\ead{alvaro.cartea@maths.ox.ac.uk}
\author[label2]{Gerardo Duran-Martin}
\author[label1,label2]{Leandro S\'{a}nchez-Betancourt}
\ead{sanchezbetan@maths.ox.ac.uk}
\begin{document}

\begin{abstract}
This paper develops a framework to predict toxic trades that a broker receives from her clients.
Toxic trades are predicted with a novel online learning Bayesian method which we call the
\textit{projection-based unification of last-layer and subspace estimation} ({\ourmodel}).
{\ourmodel} is a
fast and statistically-efficient Bayesian procedure for online training of neural networks.
We employ a proprietary dataset of foreign exchange transactions to test our methodology. Neural networks trained with {\ourmodel}
outperform standard machine learning and statistical methods when predicting if a trade will be toxic;
the benchmark methods are logistic regression, random forests, and a recursively-updated maximum-likelihood estimator.
We devise a strategy for the broker who uses toxicity predictions to internalise or to externalise each trade received from her clients. 
Our methodology can be implemented in real-time because it takes less than one millisecond to update parameters and make a prediction.
Compared with the benchmarks,
online learning of a neural network with {\ourmodel}
attains the highest PnL and avoids the most losses by externalising toxic trades. 

\end{abstract}

\maketitle
\section{Introduction}
Liquidity  providers are key to well-functioning financial markets. In foreign exchange (FX), as in other asset classes,
broker-client relationships are ubiquitous. The broker streams bid and ask quotes to her clients and the clients decide when to trade on these quotes, so the broker bears the risk of adverse selection when trading with better informed clients. 
These risks are borne by both liquidity providers who stream quotes to individual parties and by market participants who provide liquidity in the books of electronic exchanges. However, in contrast to electronic order books in which trading is anonymous for all participants (e.g., Nasdaq, LSE, Euronext), in broker-client relationships the broker knows which client executed the order. This privileged information can be used by the broker to classify flow, i.e.,  toxic or benign,  and to devise strategies that mitigate adverse selection costs.

In the literature, models generally classify traders as informed or uninformed; see e.g., \cite{theonlygameintown}, \cite{copelandgalai}, 
\cite{grossman1980impossibility}, \cite{amihud1980dealership}, \cite{kyle1989informed},  \cite{kyle1985continuous},   and \cite{glosten1985bid}. In equity markets, many studies focus on informed flow (i.e., asymmetry of information) across various traded stocks, see e.g., \cite{easley1996liquidity} who study the probability of informed trading at the stock level, while our study focuses on each trade because we have trader identification. In FX markets, \cite{butz2019internalisation} develop a model for internalisation, and \cite{RoelAggregator} studies execution in an FX aggregator and the market impact of internalisation-externalisation strategies. 
Overall,  studies of toxic flow and information asymmetry do not make predictions of toxicity at the trade level. To the best of our knowledge, ours is the first paper in the literature to use FX data with trader identification  to predict the toxicity of each trade.

In our work, a trade is toxic if a client can unwind the trade within a given time window and make a profit (i.e., a loss for the broker). Toxic trades are not necessarily informed, nor informed trades are necessarily toxic. An uninformed client can execute a trade that becomes toxic for the broker because of the random fluctuations of exchange rates. Ultimately, the broker's objective is to avoid holding loss-leading trades in her books, so it is more effective to focus on market features and on each trade the broker fills, rather than on whether a particular client is classified as informed or uninformed. For simplicity, theoretical models in the literature assume traders are informed or uninformed, while in practice not all trades sent by one particular client are motivated by superior information.

The main contributions of our paper are as follows. We predict the toxicity of each incoming trade with machine learning and statistical methods,
such as logistic regression, random forests, a recursively updated maximum-likelihood estimator,
and a neural network (NNet).
We devise a novel algorithm to update the parameters of the NNet sequentially;
we call this rule of learning {\ourmodel},
which stands for projection-based unification of last-layer and subspace estimation.  We deploy our toxicity prediction models in a proprietary dataset, and we find that using a single model for all clients (employing client-specific features) outperforms the use of one model per client. We also find that, compared with the benchmarks,
the methodology we put forward 
attains the highest PnL and avoids the most losses by externalising toxic trades.

Our new method employs a NNet to compute the probability that a trade will be toxic.
After the outcome of each trade, toxic or benign, {\ourmodel} 
updates the parameters of the NNet.  To update the parameters efficiently at each timestep, {\ourmodel} follows three steps:
one,  split the last layer from the feature-transformation layers of a NNet;
two,  project the parameters of the feature-transformation layers onto an affine subspace; and
three,  devise a recursive formula to estimate a posterior distribution over the projected feature-transformation parameters and last-layer parameters.
Specifically, we extend the subspace NNet model (subspace NNets) of \cite{duran-martin22-subspace-ekf}
to classification tasks.
We also use the exponential-family extended Kalman filter (expfam EKF)
method of \cite{ollivier17-ekf-natural-gradient} and we follow the ideas of the
recursive variational Gaussian approximation (R-VGA) results of \cite{lambert2021-rvga} to obtain the update equations in {\ourmodel}.
Finally, we impose a prior independence between the hidden layers of the NNet and the
output layer, extending the work in the last-layer Bayesian NNet (last-layer BNNs).
Figure \ref{fig:model relationship} shows the relationship of {\ourmodel} to previous methods.
In short, {\ourmodel} is a statistically-efficient update rule to learn the parameters of a NNet sequentially.

\begin{figure}[H]
    \centering
    \begin{tikzcd}
    & \substack{\text{R-VGA} \\ \text{\cite{lambert2021-rvga}} } \arrow[d] & \\
    & \substack{\text{expfam EKF} \\ \text{\cite{ollivier17-ekf-natural-gradient}} } \arrow[d] &  \\
    \substack{\text{subspace NNets} \\ \text{\cite{duran-martin22-subspace-ekf}} } \arrow[r] & \text{\ourmodel}  & \substack{\text{last-layer BNN} \\ \text{\cite[S. 17.3.5]{pml2Book}} } \arrow[l]
    \end{tikzcd}
    \caption{Relationship of {\ourmodel} to other models.}
    \label{fig:model relationship}
\end{figure}

To evaluate the predictive performance of our model and the efficacy of the broker's strategy,
we use a proprietary dataset of FX transactions from 28 June 2022 to 21 October 2022.
Initially, the models are trained with data between 28 June and 31 July,
and  the remainder of the data (1 August to 21 October) is used to deploy the strategy,
i.e., use predictions of toxicity for each trade to inform the internalisation-externalisation strategy we develop.
During the deploy phase,
the maximum-likelihood estimator is continually updated with the running average of the toxic trades
and the parameters of the NNet are updated with {\ourmodel},
 while the models based on logistic regression and random forests are not updated.\footnote{Robustness checks in the appendix study the performance of the logistic regression and random forests  when the  models are re-trained weekly. }
For a given {\toxichorizon} and a {\cutoffprobability}, the strategy internalises the trade if the probability that the trade is toxic is less than or equal to the {\cutoffprobability}, otherwise it externalises the trade. 
We compute the PnL of all trades that the broker internalised and the losses she avoids by externalising trades. 
We find that
a NNet trained with {\ourmodel}
delivers the best combination of PnL  and avoided loss across all {\toxichorizon}s we consider in this paper.

Finally, we find that a universal model is more advantageous than one model per trader.
That is, we obtain higher accuracies (when predicting toxic trades) when we train one model for all traders than when we train one model per trader;
higher accuracies result from having more data.
When one restricts to one model per trader,
the model for traders with fewer transactions is outperformed by a universal model that is trained on more datapoints.
We also find that if we build a universal model that does not consider
the inventory, cash, and recent activity of clients (i.e.,  the broker does not use the identification of the trader), the performance of {\ourmodel} deteriorates substantially when compared to the performance of a model that includes the identity and unique features of the trader.
Thus, in our dataset, client-specific variables does add value to predict the  toxicity of trades.

The remainder of paper is organised as follows.
Section \ref{sec:data-analysis} describes the data.
Section \ref{sec: toxicity} defines toxicity, provides statistics about the clients in the dataset, and illustrates the toxicity profiles of clients.
Section \ref{sec:our-method} introduces \ourmodel, which is a fast and statistically-efficient Bayesian procedure for online training of neural networks.
Section \ref{sec:deployment-methods} shows implementation details.
Section \ref{section:traders-real} uses proprietary datasets to evaluate PULSE against alternative methods.
Section \ref{sec:conclusions} presents conclusions.
We collect proofs, together with additional robustness checks, in the appendix.

\section{Data and preliminary analysis}
\label{sec:data-analysis}
We employ data for the currency pair EUR/USD from LMAX Broker and from LMAX Exchange
for the period 28 June 2022 to 21 October 2022.\footnote{\url{www.lmax.com}.}
For each liquidity taking trade filled by the broker, we use the direction of trade (buy or sell), 
the timestamp when LMAX Broker processed the trade, and the volume of the trade. 
Also, we use the best quotes and volumes available in LMAX Exchange at a microsecond frequency. In contrast to LMAX Broker, traders who interact in the limit order book (LOB) of LMAX Exchange do not know the identity of their counterparties. The LOB uses 
price-time priority to clear supply and demand of liquidity --- as in traditional electronic order books in equity markets, such as those of Nasdaq,
Euronext, and the London Stock Exchange.

Table \ref{tab:datset prop} shows summary statistics for the trading activity of six clients
of LMAX Broker in the pair EUR/USD.

\begin{table}[H]
\centering
\begin{tabular}{l|rrrr}
\noalign{\vskip 1mm}
  \hline\hline
  \noalign{\vskip 1mm}
 & Number of trades & Total volume & Avg daily volume \\
 & & \multicolumn{2}{c}{in \euro 100,000,000 }\\
\noalign{\vskip1mm}
 \hline
 \noalign{\vskip-1mm}
 \noalign{\vskip 2mm}
Client 1 & 312,073 & 43.702 & 0.520 \\
Client 2 & 56,705 & 3.006 & 0.036 \\
Client 3 & 28,185 & 3.278 & 0.039 \\
Client 4 & 27,743 & 0.456 & 0.005 \\
Client 5 & 23,938 & 27.483 & 0.348 \\
Client 6 & 13,379 & 5.379 & 0.064 \\
\hline
Total & 462,023 & 83.304 & 1.012 \\
\noalign{\vskip 1mm}
  \hline
  \hline
\end{tabular}
\caption{Trading activity in  the pair EUR/USD between clients and LMAX Broker over the period 28 June 2022 to 21 October 2022 .
Volumes are reported in one hundred  million euros.
}\label{tab:datset prop}
\end{table}

Below, we work with the data of Clients 1 to 6  and we assume that the broker quotes her clients the best available rates in LMAX Exchange net of fees.
Transaction costs in FX are around \$3 per million euros traded
(see e.g.,  \cite{cartea2023optimal}),
so this assumption  provides clients with a discount of 
\$3 per million euros traded  when trading with the broker.\footnote{
In practice, not all clients receive the same bid and ask quotes.
}

\section{Toxicity}\label{sec: toxicity}

In this paper, a trade is toxic over a given time window if the client can unwind the trade at a profit within the time window. Instead of classifying traders as informed or uninformed, the broker assesses the probability that each trade becomes toxic within a specified time window. Not all trades sent by better informed clients will be toxic, and not all trades sent by less informed clients will be benign. 
Thus, our models aim to predict price movements based on current features regardless of whether the trader is informed or not. 
Our  methods, however, include the identity of the trader, so predicting toxicity of a trade will depend, among other features, on how often the client executed toxic trades in the past.

Denote time by $t\in\mfT=[0,T]$, where $0$ is the start of the trading day and
$T$ is the end of the trading day.
From this point forward, we use `exchange rate' and `prices' interchangeably. 
The best ask price  and best bid price in the LOB of LMAX Exchange are denoted by
$(S^a_t)_{t\in\mfT}$ and $(S^b_t)_{t\in\mfT}$, respectively.
Let $\valuetoxichorizon$ be a {\toxichorizon} such that $0<\valuetoxichorizon\ll T$,  and let $t\in[0,T-\valuetoxichorizon]$.
We define the two stopping times
\begin{equation*}
    \tau^+_t = \inf\left\{u\in [t,T]: S^b_u > S^a_t\right\}  \qquad\text{and}\qquad 
    \tau^-_t = \inf\left\{u\in [t,T]: S^b_t > S^a_u\right\}  \,,
\end{equation*}
 with the convention that $\inf\,\emptyset = \infty$.
The stopping time $\tau^+_t$ is the first time after $t$ that the best bid price is above the best ask price
at time $t$.
If $\tau^+_t<\infty$, a buy trade executed at $S_t^a$ becomes profitable for the client  at time $\tau^+_t$ 
before the end of the trading day because the client can unwind her position and collect the profit
\begin{equation}\label{eqn: profit 1}
S_{\tau^+_t}^b - S_t^a>0\,.    
\end{equation}
Similarly, $\tau^-_t$ is the first time after $t$ that the best ask price is below the best bid price at time $t$.
If $\tau^-_t<\infty$, a sell trade executed at $S_t^b$ is profitable  for the client at time $\tau^-_t$
before the end of the trading day because the client can unwind her position and collect the profit 
\begin{equation}\label{eqn: profit 2}
S_t^b - S_{\tau^-_t}^a >0  \,.    
\end{equation}

\begin{definition}[Toxic trade]\label{def: toxicity} 
Let  $\valuetoxichorizon > 0$ be a {\toxichorizon}.
A client's buy (resp.~sell) filled by the broker at time $t$ is toxic for the broker if $\tau^+_t \leq t+\valuetoxichorizon$ (resp.~if $\tau^-_t \leq t+\valuetoxichorizon$). 
\end{definition}

The above definition captures the broker's exposure to adverse selection.
A trade is labelled as toxic if over a given time window the client had the option to unwind the trade at a profit,
in which case it would be a loss-leading trade for the broker. However, one cannot verify, ultimately, if the potentially toxic trade materialised as a loss to the broker or to another market participant. We do not have enough information to track each step, or potential step, in the life cycle of a trade to determine who made a loss or a gain --- to make this assessment requires perfect knowledge of all trades by all market participants.\footnote{
There are alternative ways of defining toxic flow with our proprietary data set. For instance, the broker values a trade after unwinding it with the first trade in the opposite direction, i.e., a first-in-first-out inventory valuation. }

Figure \ref{fig:toxic-profile}  plots the trajectories of $S^a_t$ and $S^b_t$ for EUR/USD between 10:00:00 am 
and 10:00:10 am in LMAX Exchange on 28 June 2022.
The dotted line is the best ask price and the dash-dotted line is the best bid price.
The solid horizontal lines are the best ask price and the best bid price at 10:00:00 am.

\begin{figure}[H]
    \centering \includegraphics[width=0.8\linewidth]{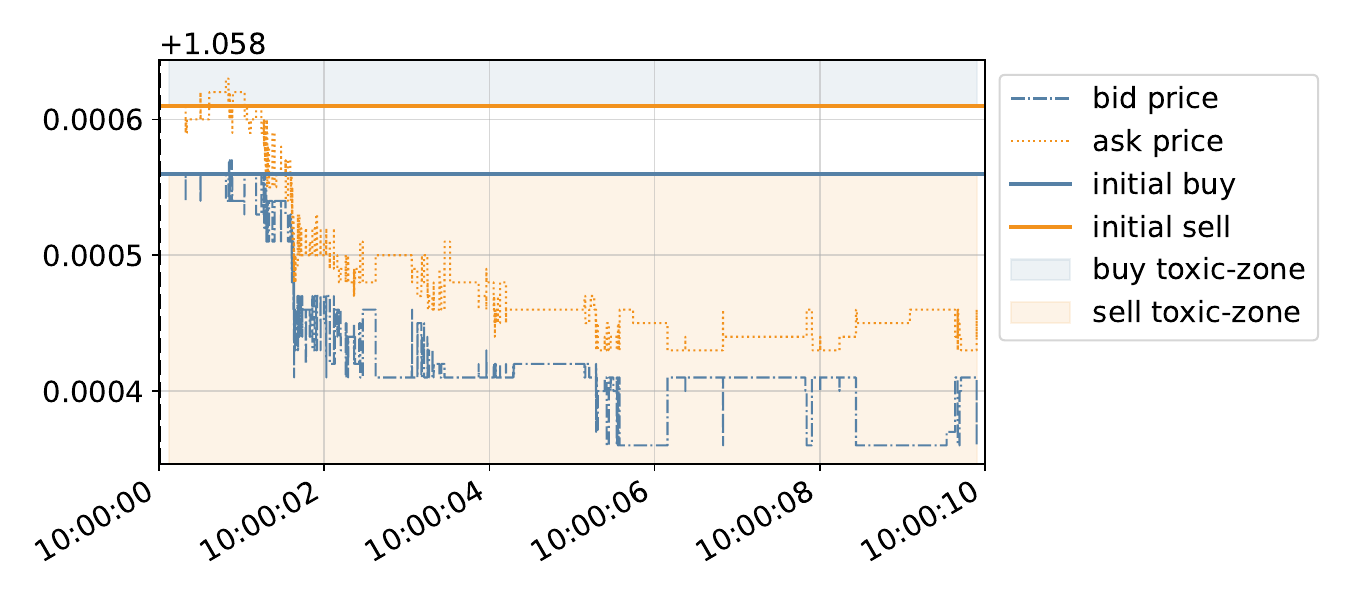}
    \caption{A client's sell trade that becomes toxic for the broker after a few seconds of filling the trade. The $x$-axis is time and the $y$-axis is in units of USD. }
    \label{fig:toxic-profile}
\end{figure}

In the figure, if a client buys from the broker at the best ask price at time $t=$10:00:00 am, then there is no opportunity for the trader to unwind the trade at a profit in the first ten seconds after the trade. 
However, had the trader sold to the broker at the best bid price at time $t=$10:00:00 am, then shortly after 10:00:01 the trade would be in-the-money for the client (i.e., toxic for the broker).

\subsection{Toxicity profiles}\label{sec: toxicity profiles}

For a {\toxichorizon} $\valuetoxichorizon>0$, we use both the client data and the LOB data to determine if the trades filled by LMAX Broker were toxic over the period $\valuetoxichorizon$. Table \ref{tab:toxicity proportion clients} shows the percentage of toxic trades
executed by each client
for $\valuetoxichorizon \in \{1,\, 5,\, 10,\,20,\, 30,\, 40,\, 50,\, 60,\, 70\}$ seconds. 

\begin{table}[H]
\centering
\begin{tabular}{l|rrrrrrrrr}
\noalign{\vskip 1mm}
  \hline\hline
  \noalign{\vskip 1mm}
& \multicolumn{9}{c}{{\toxichorizon} $\valuetoxichorizon $ in seconds}  \\
 & 1 & 5 & 10 & 20 & 30 & 40 & 50 & 60 & 70 \\
\noalign{\vskip1mm}
 \hline
 \noalign{\vskip-1mm}
 \noalign{\vskip 2mm}
 Client 1 & 6.7 & 25.7 & 38.4 & 51.5 & 58.7 & 63.4 & 66.7 & 69.3 & 71.3 \\
Client 2 & 7.0 & 28.6 & 42.4 & 56.1 & 63.0 & 67.5 & 70.7 & 73.1 & 74.9 \\
Client 3 & 7.0 & 26.0 & 38.6 & 51.1 & 58.2 & 62.6 & 65.8 & 68.2 & 70.4 \\
Client 4 & 3.4 & 18.5 & 30.7 & 44.3 & 52.1 & 56.8 & 60.6 & 63.5 & 66.0 \\
Client 5 & 8.3 & 22.1 & 31.7 & 42.9 & 50.1 & 55.4 & 59.6 & 62.2 & 63.9 \\
Client 6 & 5.9 & 26.4 & 40.2 & 53.3 & 60.9 & 65.7 & 69.0 & 71.8 & 73.9 \\
\hline
All clients & 6.6 & 25.5 & 38.2 & 51.2 & 58.4 & 63.1 & 66.5 & 69.0 & 71.0 \\
\noalign{\vskip 1mm}
  \hline
  \hline 
\end{tabular}
\caption{Proportion of toxic trades (in \%)  between 28 June 2022 and 21 October 2022.}\label{tab:toxicity proportion clients}
\end{table}

As expected, for short toxicity horizons only a small proportion of trades are toxic (e.g., 6.6\% for a one second horizon),
but as the toxicity horizon increases, the proportion of toxic trades grows considerably
(e.g., it is roughly 70\% after one minute).
A simple mathematical argument can help us justify what we observe in the data.
Consider a trader who sends a liquidity taking trade to the broker when the spread in the market is
$\mathfrak{s}$ and suppose that the profitability of unwinding the trade,
which is $-\mathfrak{s}$ at time zero,
diffuses according to a scaled Brownian motion $\sigma W_t$ with $\sigma>0$.
From the reflection principle,
the probability that such a trade becomes toxic at any point between zero and $\valuetoxichorizon$ seconds
is given by
\begin{equation}
\mathbb{P}\bigg(\sup_{t\in[0,\valuetoxichorizon]} \sigma\,W_t \geq \mathfrak{s}\bigg) = \mathbb{P}\bigg(\sup_{t\in[0,\valuetoxichorizon]} W_t \geq \frac{\mathfrak{s}}{\sigma}\bigg) = 2\,\mathbb{P}\bigg(W_\valuetoxichorizon \geq \frac{\mathfrak{s}}{\sigma}\bigg) = 2\,\bigg(1- \Phi\Big(\frac{\mathfrak{s}}{\sigma\sqrt{\valuetoxichorizon}}\Big)\bigg)\,,
\end{equation}
where $\Phi$ is the standard normal cumulative distribution function.
As the horizon $\valuetoxichorizon\to\infty$,
the probability that the trade is toxic at some point converges to one.
This does not mean that the broker loses money on each trade.
That would happen only if the broker failed to hedge in the lit market and every liquidity unwound exited at the first profitable moment.
In reality, the broker’s inventory shifts constantly due to both incoming trades and her own hedging in the lit market.

Arguably, a client can be labelled as toxic according to the percentage of their trades that are in-the-money after a set time frame, e.g., after $\valuetoxichorizon$ in Definition \ref{def: toxicity} above.
Figure \ref{fig:sharpness-profile} shows the toxicity profiles of two clients trading EUR/USD on 8 July 2022 with LMAX Broker. The figures summarise all trades by clients A and B as follows. For each trade on 8 July 2022, the plot shows the profitability, from the client's perspective, of unwinding each trade a given number of seconds after the trade. Here, the $x$-axis goes from zero to ten seconds and the $y$-axis is in dollars per million euros traded.

\begin{figure}[htb]
    \centering
    \includegraphics[width=0.45\linewidth]{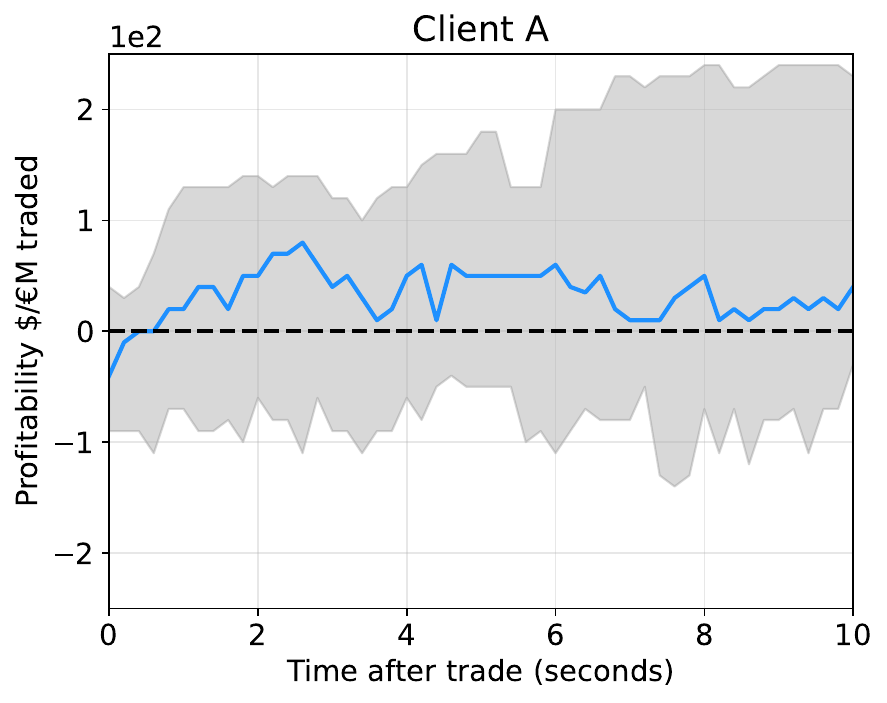}
    \includegraphics[width=0.45\linewidth]{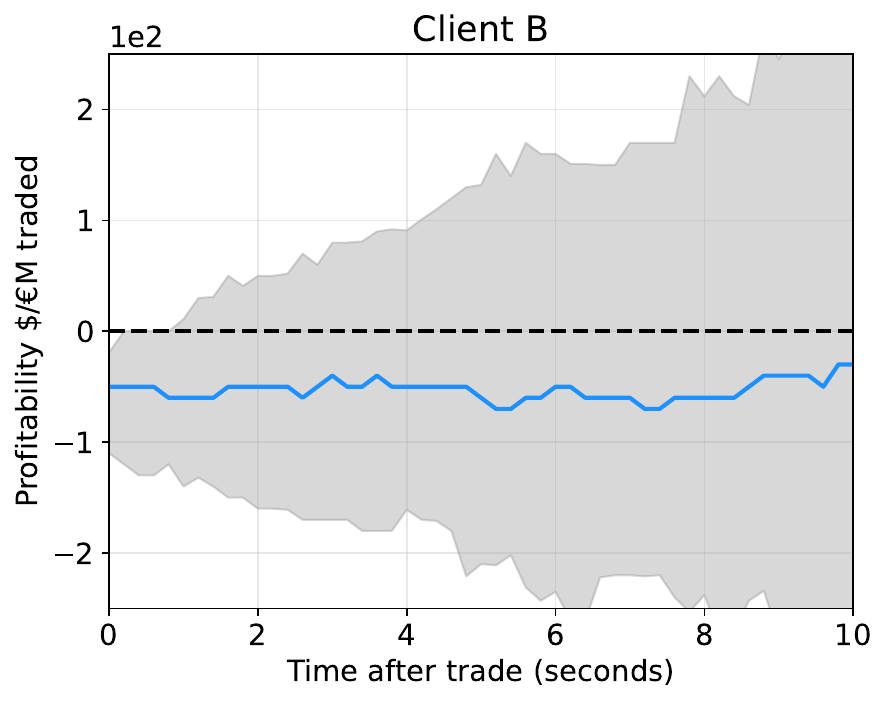}
    \caption{Profitability in dollars per million euros traded after a trade is executed. Panels correspond to two different clients. Blue line is the median trajectory and  grey region is the 90\% trajectory region. The $x$-axis is time and the $y$-axis is the profitability from the point of view of the client. }
    \label{fig:sharpness-profile}
\end{figure}

From Figure \ref{fig:sharpness-profile}, and all else being equal, a broker would prefer to provide liquidity to client B
instead of client A.
More than 50\% of the times that client A trades, the broker is exposed to making a loss on the trade in less than half a second.
On the other hand, the median trajectory of profitability (blue line) for client B is below zero.\footnote{Clearly, if the broker uses the flow from client A to inform other decisions and other investment strategies that are profitable for the broker, the net effect of trading at a loss with client A could be positive for the broker; see \cite{cartea2022brokers}.}

\subsection{Features to predict toxic trades}\label{sec:features}
For each client, the broker uses features that reflect (i) the state of the LOB, (ii) recent activity in the LOB,
and (iii) the cash and inventory of the client in the EUR/USD currency pair.\footnote{We use the transactions with the broker to compute the cash and inventory of the clients. Clients can trade elsewhere but this is unknown to the broker.}
Our features build on \cite{ait2022and}, who propose using three different clocks to aggregate LOB data. 
Here, we compute eight LOB statistics,
each with three clocks (time, volume, and transaction),
and seven backward-looking \textit{clock} intervals of increasing length;
thus, we have $8 \times 3 \times 7 = 168$ clock-based features.
We also employ fifteen other features (e.g., cash and inventory),
so there are 183 features per client.

For each clock and a given interval in the past (as measured by the clock), we employ the following eight features: 
(a) volatility of the midprice, 
(b) number of trades executed by the client in the interval, 
(c) number of updates in the best quotes of LMAX Exchange in the interval, 
(d) return of the midprice over the interval, 
(e) average transformed volume in the best bid price, 
(f) average transformed volume in the best ask price, 
(g) average spread, and 
(h) average imbalance of the best available volumes, 
where ``average'' is simple arithmetic mean within an interval.

The three clocks provide alternative ways of deciding the datapoints that fall in  a time  interval. 
For example, 
with a time clock, the spread over the last two seconds is computed with all datapoints in the last  2 seconds. 
With a volume clock,
the spread over the last $V$ units of volume traded is computing
with all the data in the past (chronologically) until we gather $V$ units of volume traded. 
With a transaction clock, the spread over the last $k$ transactions is computed with the datapoints from the last $k$ transactions.
See \ref{app: features} for more details.

For a given client $c\in\mcA$, we use the following additional features:
(i)  cash,
(ii)  inventory,
(iii) volume of the order,
(iv)  spread  in the market just before the order arrives, 
(v)  imbalance in the LOB just before the order arrives, 
(vi)  transformed available volume in the best bid,
(vii)  transformed available volume in the best ask,
(viii) last ask price at the time of trade,
(ix) last bid price at the time of trade,
(x) last midprice at the time of trade,
(xi) total number of market updates since starting date,
(xii) number of trades made by client $c$,
(xiii) total number of trades executed by all clients,
(xiv) volatility estimate of the mid-price, and
(xv) proportion of previous toxic trades executed by client $c$.
{
We apply log-transformations to stabilise scale and limit outlier leverage:
a signed log-transformation of the form $\text{sign}(x)\,\log(1+|x|)$ for (i) cash and (ii) inventory, and a $\log(1+x)$ transform for volumes in (iii), (vi), and (vii). 
These monotone transforms reduce heteroskedasticity and outlier influence and improve numerical conditioning for the downstream models;
see e.g., \cite{west2022best}.
}
The fifteen features above account for both the state of the LOB, 
and the cash and inventory of the client.
The remaining 168 features account for recent activity in the LOB.
These are features (a)--(h) above, measured for each of the seven intervals and for each of the three clocks to obtain a total of $8\times 7\times 3 = 168$ features.
Thus, for each client, we employ $15 + 8\times 7 \times 3 = 183$ features to predict the toxicity of their trades. 

{
In  \ref{sec:independent-clock-performance}
we illustrate that the performance of \ourmodel does not improve with
additional information from the volume and transaction clock.
In what follows we use the features provided by the three clocks to align with prior work and for completeness.
}

\section{The {\ourmodel} method}
\label{sec:our-method}

{
For a trade from client $c\in\mcA$ filled at time $t$, let $\xt\in\reals^M$ denote the features observed at $t$ (with $M$ the number of features).
Define $\yt\in\{0,1\}$ as the toxicity indicator evaluated $\valuetoxichorizon>0$ after $t$, with
$\yt=1$ if the trade is toxic within the horizon, and $0$ otherwise.
Thus, $\xt$ is observed at time $t$, whereas $\yt$ is known at time $t+\valuetoxichorizon$.

For each trading day we collect observations $(\data_{t_i})_{i\in I}$,
where $\data_{t_i}=(\xt[t_i],\yt[t_i])$, $I=\{1,2,\dots,N\}$, $t_i$ is the arrival time of observation $i$,
and $N$ is the number of trades that day.
For notational convenience we write $\data_i$ for $\data_{t_i}$ and refer to the dataset as $(\data_i)_{i\in I}$.
For $n\in\N$, let $\data_{1:n}=\{\data_1,\ldots,\data_n\}$ denote the first $n$ observations;
in particular, for $n\le N$ we have $\data_{1:n}\subseteq \data_{1:N}$.
}

{
Conditioned on the features $\xt$, the toxicity indicator $\yt \in \{0,1\}$ is modelled as Bernoulli random variable
with probability mass function
}
\begin{equation}\label{eq: model for yt}
    p(y \,\vert\, \bm\theta; \xt)
    = \text{Bern}\Big(y \,\vert\, \sigma(\plast^\intercal\, g(\phidden; \xt))\Big),\qquad y\in\{0,1\}\,,
\end{equation}
where $g: \reals^M \to \reals^L$ is the output-layer of a NNet,
{
${\bm\theta} = (\plast, \phidden)$ are the parameters of the neural network,
$\plast$ are the parameters of the \textit{last layer},
and
$\phidden$ are the parameters in the \textit{hidden layers}.
}
Here, and throughout the paper, we adopt the convention that $p$ denotes a likelihood function 
or a posterior density function. 
The function $\text{Bern}(\cdot \,\vert\, \cdot)$ is given by
\begin{equation*}
    \text{Bern}(a\,\vert\,b) = b^a\,(1-b)^{1-a},\ a\in\{0, 1\} \text{ and } b\in(0, 1)\,.
\end{equation*}
We refer to $\plast \in \reals^L$ as the last-layer parameters
and we refer to $\phidden \in \reals^D$ as the feature-transform parameters.
The function $\sigma(\plast^\intercal\, g( \phidden;\xt))$ is a NNet for classification
where $\sigma(x) = (1 + \exp(-x))^{-1}$ is the  sigmoid function.
Figure \ref{fig:pulse-description} shows a graphical representation of
the parameters that {\ourmodel} updates when the
NNet is a multilayered-perceptron (MLP).
Although we choose an MLP in the experiments, {\ourmodel} can
be used with any  NNet architecture.

\begin{figure}[H]
\centering
\scalebox{0.6}{

\begin{tikzpicture}[scale=1]
	\tikzstyle{short} = [ shorten >=20, shorten <=20 ]
	\draw[thick] (0,-3)
		circle (5pt)
		node[below, yshift=-10] {$p(1 | \sigma({\bf w}^\intercal g(\boldsymbol{{\bf x}_t;\psi}))$}
	;
	
	\foreach[count=\i] \x in {-2, -1, 1, 2}{
		\draw[thick, short]
			 (\x, -1) -- (0, -3);
		\draw[thick] (\x, -1) circle (5pt);
	};
	
	\foreach[count=\i] \x in {-3, -2, 2, 3}{
		\draw[thick] (\x, 1) circle (5pt);
	};
	
	\foreach[count=\i] \x in {-2, 2}{
		\draw[thick] (\x, 3) circle (5pt);
	};
	
	\foreach \i in {-3, -2, 2, 3}{
		\foreach \j in {-2, 2} {
			\draw[thick, short]
			(\i, 1) -- (\j, 3);
		};
	};
	
	\draw[thick, short] (0, 5) -- (-2, 3);	
	\draw[thick, short] (0, 5) -- ( 2, 3);	
	\filldraw[thick] (0, 5) circle (5pt)
	node[above, yshift=10] {${\bf x}_t\in\mathbb{R}^M$};
	
	\draw (0, -1) node {\huge $\ldots$};
	\draw (0,  0) node {\huge $\vdots$};
	\draw (0,  1) node {\huge $\ldots$};
	\draw (0,  3) node {\huge $\ldots$};

	\draw
	[thick, decorate,decoration={brace,amplitude=10pt}]
	(-4,-0.6) -- (-4,5) node [left, midway, xshift=-10pt]
	{\large ${\boldsymbol{\psi}}$}; 
	
	\draw
	[thick, decorate,decoration={brace,amplitude=10pt}]
	(-4,-3) -- (-4,-0.9) node [left, midway, xshift=-10pt] {\large ${\bf {w}}$};
	
	\draw
	[thick, decorate,decoration={brace, amplitude=10pt, mirror}]
	(4,-3) -- (4,5) node [right, midway, xshift=18] {\large ${\boldsymbol{\theta}} = (\plast, \phidden)$};

\end{tikzpicture}}
\caption{
    {\ourmodel} architecture for an MLP. The MLP is parameterised
    by $\bm\theta = (\phidden, \plast)$, where
    $\phidden$ are the parameters in the hidden layers
    and $\plast$ are the parameters in the last layer.
}
\label{fig:pulse-description}
\end{figure}

\subsection{Sequential learning}
{
We update the model parameters sequentially after each observed toxicity label to incorporate new information.
Specifically, we update
}
${\bm\theta} = (\plast, \phidden)$ in \eqref{eq: model for yt} after each new $y_t$ is observed,
i.e., after observing if the trade is toxic.
In practice, the dimension  $D$ of the featured-transformed parameters and the dimension $M$ of the feature space satisfy $D\gg M$,
so  it is costly to update $\bm\theta$
after each new observation using standard training techniques.
Thus, the literature proposes various approaches to estimate the parameters
of a NNet at a lower computational cost.
In this paper, we build on two of these methods: lottery-ticket and last-layer methods.

Lottery-ticket methods exploit the over-parametrisation of NNets,
in the sense that
``a randomly-initialised dense NNet contains subnetworks that, when trained in isolation,
reach test accuracy comparable to that of the original network'', \cite{frankle2019lottery}.
The lottery-ticket hypothesis states that such a subnetwork exists, and
subnetworks satisfying the lottery-ticket hypothesis are called \textit{winning tickets}.
\textit{Winning tickets} are linear projections of the NNet parameters onto a subspace; see  \cite{li2018subspacenn} and \cite{dof-dnns-2021}.
\cite{duran-martin22-subspace-ekf} use the lottery-ticket hypothesis with a
relatively small linear subspace, and 
use the extended Kalman filter (EKF) algorithm to propose a sequential update of the subspace parameters.

Alternatively, last-layer methods pre-train the NNet parameters in a \textit{warmup} phase
and then perform sequential updates on the last-layer parameters,
see \citet[S. 17.3.5]{pml2Book}.
Here, we employ both methods.
Specifically, we propose a Bayesian approach to  update the parameters of a NNet sequentially
for both the subspace parameters in the hidden-layers
and all of the parameters in the last layer.
While previous literature focuses on  updating either the last-layer or all parameters when performing online learning,
ours is the first work that projects the parameters of the hidden layer and
updates all of the units in the last layer.
{
This decomposition enables full-rank updates of the last-layer parameters
while restricting hidden-layer updates to a linear subspace.
In doing so, it balances the rapid sequential updates typical of subspace neural networks
with the statistical efficiency characteristic of last-layer methods.
}

{
In particular, we modify
}
\eqref{eq: model for yt} and
decompose
{
the parameters of the hidden layer
}
$\phidden\in\reals^D$ as an affine projection of the form
\begin{equation}\label{eq:subspace-decomposition}
    \phidden = {\bf A}\,{\bf z} + {\bf b}\,,
\end{equation}
where ${\bf A}\in\reals^{D\times d}$ is the fixed projection matrix  and
${\bf z}\in\reals^d$ are the projected (subspace) parameters such that $d \ll D$, 
and ${\bf b}\in\reals^D$ is the offset term.
{
The decomposition \eqref{eq:subspace-decomposition}
provides a linear-subspace formulation of the lottery-ticket hypothesis
and enables efficient updates of the hidden layer parameters.
}
With this projection, we rewrite \eqref{eq: model for yt} as 
$p(y \,\vert\, \phiddensub, \plast; \xt) =
\text{Bern}\Big(y \,\vert\, \sigma(\plast^\intercal\, g({\bf A}\,\phiddensub + {\bf b};\, \xt))\Big)$,
$y\in\{0,1\}$.
To simplify notation, we define $h(\phiddensub; \xt) = g({\bf A}\,\phiddensub + {\bf b}; \xt)$ and write
\begin{equation}\label{eq:model-yt-sub}
    p(y \,\vert\, \phiddensub, \plast; \xt)
    = \text{Bern}\Big(y \,\vert\, \sigma(\plast^\intercal\, h(\phiddensub; \xt))\Big).
\end{equation}

{
Our procedure has two stages:
(i) an offline warmup phase that estimates ${\bf A}$ and ${\bf b}$ and uses  ${\cal D}_\text{warmup}$ to select hyperparameters
and
(ii) an online phase that performs fast sequential updates of $(\plast,\phiddensub)$ on the live stream ${\cal D}_\text{deploy}$ while holding ${\bf A},{\bf b}$ fixed.
Here, ${\cal D}_\text{warmup}$ is a one-off historical window,
whereas ${\cal D}_\text{deploy}$ grows over time and is used for online learning and evaluation.
}

{
Figure~\ref{fig:diag-warm-up-deploy} illustrates the workflow: an offline warmup phase using ${\cal D}_\text{warmup}$ (Section~\ref{subsec:warm-up}) followed by an online deploy phase using ${\cal D}_\text{deploy}$ (Section~\ref{subsec:deploy}). The deploy phase proceeds indefinitely as trades arrive (the time index $T$ may grow over time).
}
\begin{figure}[h!]
    \centering
    \begin{tikzpicture}
\draw[ultra thick, ->] (0,0) -- (12cm,0);

\foreach \x in {0, 2,4,6,8,10,12} \draw (\x cm,3pt) -- (\x cm,-3pt);

\draw (0, 0) node[below=3pt] {$t_0$};
\draw (4, 0) node[below=3pt] {$t_\text{warmup}$};
\draw (12, 0) node[below=3pt] {$T$};

\draw[dashed] (4, 0) -- (4, 1.5) node[above=3pt, align=center]
{\scriptsize Initialise \\ \scriptsize$\phi_0(\plast)$, $\varphi_0(\phiddensub)$};

\draw[
black, thick,
decorate, decoration={brace, amplitude=5pt}
]
(0, 0.5) -- (3.9, 0.5) node[midway, above=4pt, align=center] {\scriptsize warmup stage\\ \scriptsize estimate $\bf A$, $\bf b$};

\draw[
black, thick,
decorate, decoration={brace, amplitude=5pt},
]
(12, -0.7) -- (4.1, -0.7) node[midway, below=4pt, align=center] {\scriptsize deploy stage\\ \scriptsize estimate $\vdlast, \vdhidden\,\; \forall t$ };

\end{tikzpicture}
    \caption{
    Warmup and deployment stages. We use all data available from
    $t_0$ to $t_\text{warmup}$ to estimate $\bf A$ and $\bf b$.
    At $t_\text{warmup}$, we initialise the variational approximations
    $\phi_0(\plast)$ and $\varphi_0(\phiddensub)$. Finally,
    for  $t > t_\text{warmup}$, we estimate $\plast_t$ and $\phiddensub_t$.
    }
    \label{fig:diag-warm-up-deploy}
\end{figure}

{

}

\subsubsection{Warmup phase: estimating the projection matrix and the offset term}\label{subsec:warm-up}
{
This phase estimates  ${\bf A}$ and ${\bf b}$, and assigns prior distributions for $\plast$ and $\bf z$.
}
Given the size of the dataset, we divide $\dwarmup$ into $B$ 
non-intersecting random batches
$\data_{(1)}, \ldots, \data_{(B)}$ such that $$\bigcup_{b=1}^B \data_{(b)} = \dwarmup\,.$$

To estimate ${\bf b}$ and $\bf A$, we use mini-batch stochastic gradient descent (SGD) over $\dwarmup$ to  minimise the negative loss-function
\begin{equation}\label{eq:neg-log-likelihood}
    -\log p(\data \,\vert\, {\bm\theta}) = -\sum_{n=1}^N \log p(\yt[n]\,\vert\,{\bm\theta}, \xt[n])\,,
\end{equation}
where $\data$ is any random batch.
The vector ${\bf b}$ is given by
\begin{equation}
    {\bf b} = \argmin_{\bm\theta}-\log p(\data\,\vert\,\bm\theta).
\end{equation}

Singular-value decomposition (SVD) over the iterates of the SGD optimisation procedure gives the projection matrix ${\bf A}$.
To avoid redundancy, we skip the first $n$ iterations and store the iterates every $k$ steps. The dimension of the subspace
$d$ is found via hypeparameter tuning and 
convergence to a local minimum of \eqref{eq:neg-log-likelihood} is obtained through multiple passes of the data.
Algorithm \ref{algo:MAP-SGD} shows the training procedure for a number  $E$ of epochs.
\vspace{0.3cm}

\begin{algorithm}[H]
def \textbf{warmupParameters}: \\
  Initialise model parameters ${\bm\theta} = (\phidden, \plast)$\\
  \ForEach{epoch $e=1,\ldots,E$}{
    \ForEach{batch $m=1,\ldots,M$}{
      $\text{gradient} = -\nabla_{\bm\theta}\log p(\data_{(m)}
      \,\vert\, {\bm\theta})$\\
      ${\bm\theta}^{(e)} \gets {\bm\theta}^{(e-1)} - \alpha\, \bm\kappa(\text{gradient})$
    }
}
\caption{MAP parameter estimation via batch SGD}
\label{algo:MAP-SGD}
\end{algorithm}

\vspace{0.2cm}

In Algorithm \ref{algo:MAP-SGD}, the parameter $\alpha$ is the learning rate, and
the function $\bm\kappa: \reals^M \to \reals^M$ is the per-step transformation of the Adam algorithm;
see \cite{kingma14-adam}.
At the end of the $E$ epochs, we obtain ${\bm\theta}^{(E)} = (\phidden^{(E)}, \plast^{(E)})$.
Then, the offset term ${\bf b}$ is given by
\begin{equation*}
    {\bf b} = \phidden^{(E)},
\end{equation*}
and we stack the history of the SGD iterates.
To avoid redundancy, we skip the first $n$ iterates of the SGD, which are stored at every $k$ steps. We let
\begin{equation*}
{\cal E} = \begin{bmatrix}
    \horzbar & \phidden^{(n)} & \horzbar\\
    \horzbar & \phidden^{(n + k)} & \horzbar\\
    \horzbar & \phidden^{(n + 2k)} &\horzbar\\
    & \vdots & \\
    \horzbar & \phidden^{(E)} & \horzbar\\
\end{bmatrix} \in \reals^{\hat{E}\times D}\,,
\end{equation*}
where $\hat{E} = E - n +1$.
With the SVD decomposition ${\cal E} = {\bf U}\,\bm\Sigma\,{\bf V}$ and the first $d$ columns of the matrix ${\bf V}$,
the projection matrix is 
\begin{equation*}
    {\bf A} = \begin{bmatrix}
    \vertbar & \vertbar &  & \vertbar \\
    {\bf V}_1 & {\bf V}_2 & \ldots & {\bf V}_d \\
    \vertbar & \vertbar & & \vertbar \\
    \end{bmatrix},
\end{equation*}
where ${\bf V}_k$ denotes the $k$-th column of ${\bf V}$. 

\subsubsection{
Deploy phase: online estimation of last-layer and subspace-feature-transform parameters
}\label{subsec:deploy}

Here, we derive a novel, sample-free, and closed-form update rule that estimates the parameters of
\eqref{eq:model-yt-sub} sequentially.
Specifically, in Proposition \ref{app: prop: fixed point} in the Appendix, we find a set of fixed-point equations for the update rule.
These equations need many iterations to converge  and are computationally inefficient.
Thus, Corollary \ref{cor:rvga-order-2} computes the gradient with respect to the subspace parameters to simplify the computations. 
Finally, Theorem  \ref{theorem:subspace-last-rvga} uses a Taylor expansion of the measurement model
to obtain a closed-form solution to the set of fixed-point equations. This is computationally efficient because the update can be obtained in a single iteration.

We introduce Gaussian priors for both $\plast$ and ${\bf z}$ at the beginning of the deploy stage.
Let $n=0$ denote the last timestamp in the warmup dataset and $n=1$ the first timestamp of the deploy dataset.
Denote by $\phi_n$ and $\varphi_n$ the posterior distribution estimates for $\plast$ and ${\bf z}$ at time $t$, respectively.
The initial estimates are given by 
\begin{align*}
    \phi_0(\plast) &= \normdist{\plast}{\plast^{(M)}}{\sigma^2_\plast\,{\bf I}\,},\\
    \varphi_0({\bf z}) &= \normdist{{\bf z}}{{\phidden}^{(M)}\,{\bf A}}{\sigma^2_\phiddensub\,{\bf I}\,},
\end{align*}
where $(\plast^{(M)}, \phidden^{(M)})$ are the last iterates in the warmup stage,
$\sigma^2_\plast$ and $\sigma^2_\phiddensub$ are the coefficients of the prior covariance matrix,
${\bf I}$ is the identity matrix,
and recall that ${\bf A}$ is the projection matrix.

For  $n \geq 1$, the variational posterior estimates are given by
\begin{equation*}
\begin{aligned}
    \vdlast[n] &= \normdist{\plast}{\mlast_n}{\covlast_n}\quad\text{ and }\quad
    \vdhidden[n]  = \normdist{{\bf z}}{\mhidden_n}{\covhidden_n}. 
\end{aligned}
\end{equation*}
Next, to find the posterior parameters $\mhidden_n, \mlast_n, \covhidden_n, \covlast_n$, we recursively solve
the following variational inference (VI) optimisation problem
\begin{equation}\label{eq:subspace-last-rvga}
    \mhidden_n,\, \mlast_n,\, \covhidden_n,\, \covlast_n =
    \argmin_{\mhidden,\, \mlast,\, \covhidden,\, \covlast}
    \text{KL}\left(
    \normdist{\plast}{\mlast}{\covlast}\,\normdist{{\bf z}}{\mhidden}{\covhidden} || \phi_{n-1}(\plast)\,\varphi_{n-1}({\bf z}) \,p(\yt[n] \,\vert\, {\bf z}, \plast; \xt[n])
    \right)\,,
\end{equation}
where KL is the Kullback--Leibler divergence
\begin{equation*}
    \text{KL}(p(x) || q(x)) = \int p(x)\log\left(\frac{p(x)}{q(x)}\right) \diff x\,,
\end{equation*}
for probability density functions  $p$ and $q$  with the same support.
The optimisation in \eqref{eq:subspace-last-rvga} generalises the update rule for the Kalman filter when the parameters do no\textbf{}t have a drift; see e.g., \cite{lambert2021-rvga}.
The following theorem shows the update and prediction equations of the {\ourmodel} method.

\begin{theorem}[\ourmodel]\label{theorem:subspace-last-rvga}
      Suppose $\log p(\yt[n] \,\vert\, \phiddensub, \plast; \xt[n])$ is differentiable with respect
    to $(\phiddensub, \plast)$ and the observations $\{\yt[n]\}_{n=1}^N$ are conditionally independent over
    $(\phiddensub, \plast)$.
    Write the mean of the target variable $\yt[n]$ as a first-order approximation of the parameters centred around their previous estimate.
    Let $\sigma(x) = (1 + \exp(-x))^{-1}$ be the sigmoid function and $\sigma'(x) = \sigma(x) (1 - \sigma(x))$ its derivative.
    Then, an approximate solution to \eqref{eq:subspace-last-rvga}
    is given by
    \begin{align}
        \mlast_n &= \mlast_{n-1} + \covlast_{n-1}\,h(\mhidden_{n-1}; \xt[n])  \Big(\yt[n] - \sigma(\mlast_{n-1}^\intercal\, h(\mhidden_{n-1}; \xt[n]))\Big), \label{eq:part-mlast-update} \\
        \covlast_{n}^{-1} &= \covlast_{n-1}^{-1} + \sigma'\big(\mlast_{n-1}^\intercal\, h(\mhidden_{n-1}; \xt[n])\big) h(\mhidden_{n-1}; \xt[n])^\intercal\, h(\mhidden_{n-1}; \xt[n]), \label{eq:part-covlast-update}\\
        \mhidden_t &= \mhidden_{n-1} + \covhidden_{n-1}\nabla_{\phiddensub}h(\mhidden_{n-1}; \xt[n])  \Big(\yt[n] - \sigma(\mlast_{n-1}^\intercal\, h(\mhidden_{n-1}; \xt[n]))\Big), \label{eq:part-mhidden-update}\\
        \covhidden_{n}^{-1} &= \covhidden_{n-1}^{-1} + \sigma'\big(\mlast_{n-1}^\intercal\,
        h(\mhidden_{n-1}; \xt[n])\big) \nabla_{\phiddensub} h(\mhidden_{n-1}; \xt[n]) \nabla_{\phiddensub}h(\mhidden_{n-1}; \xt[n])^\intercal \,, \label{eq:part-covhidden-update}
    \end{align}
    where $\mhidden_{n}$, $\covhidden_{n}$ are the estimated mean and covariance of the projected-hidden-layer parameters at step $n$, and
    $\mlast_n$, $\covlast_n$ are the estimated mean and covariance matrix of the last-layer parameters.
\end{theorem}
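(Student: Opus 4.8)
The plan is to read \eqref{eq:subspace-last-rvga} as a single step of a recursive variational Bayes filter and to solve that per-step problem with the classical Gaussian stationarity conditions, then make those conditions explicit through the linearisation hypothesised in the statement.

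First I would note that, by the conditional-independence assumption, the joint posterior over $(\plast,\phiddensub)$ given $\data_{1:t}$ factorises across time given the parameters, so updating the previous variational posterior $\phi_{t-1}(\plast)\,\varphi_{t-1}(\phiddensub)$ with the single new likelihood factor $p(\yt\mid\phiddensub,\plast;\xt)$ is an exact Bayes step whose (unnormalised) result is precisely the second argument of the $\mathrm{KL}$ in \eqref{eq:subspace-last-rvga}. Minimising $\mathrm{KL}(q\,\|\,\tilde\pi)$ over the mean-field Gaussian family $q(\plast,\phiddensub)=\normdistlast\,\normdisthidden$ is equivalent to maximising the evidence lower bound $\E_q[\log\tilde\pi]+H(q)$; since the family is a product, $H(q)=H(\normdistlast)+H(\normdisthidden)$, so $\nabla_{\covlast}H(q)=\tfrac12\covlast^{-1}$ and $\nabla_{\covhidden}H(q)=\tfrac12\covhidden^{-1}$, and the two blocks of the problem decouple.

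Next I would differentiate the lower bound using the Gaussian integration-by-parts identities (Bonnet's and Price's theorems): for each marginal, $\nabla_{\mlast}\E_q[f]=\E_q[\nabla_{\plast}f]$ and $\nabla_{\covlast}\E_q[f]=\tfrac12\,\E_q[\nabla^2_{\plast}f]$, and likewise in $\phiddensub$. Writing $\log\tilde\pi=\log\phi_{t-1}(\plast)+\log\varphi_{t-1}(\phiddensub)+\log p(\yt\mid\phiddensub,\plast;\xt)+\mathrm{const}$, the prior terms are quadratic with constant curvatures $-\covlast_{t-1}^{-1}$ and $-\covhidden_{t-1}^{-1}$, so setting the gradients to zero gives the implicit fixed-point relations
\begin{align*}
\mlast-\mlast_{t-1}&=\covlast_{t-1}\,\E_q\!\big[\nabla_{\plast}\log p(\yt\mid\phiddensub,\plast;\xt)\big],\\
\covlast^{-1}&=\covlast_{t-1}^{-1}-\E_q\!\big[\nabla^2_{\plast}\log p(\yt\mid\phiddensub,\plast;\xt)\big],
\end{align*}
together with the analogous pair for $(\mhidden,\covhidden)$. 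I would then invoke the hypothesis that the target mean $\sigma(\plast^\intercal h(\phiddensub;\xt))$ is replaced by its first-order Taylor expansion in $(\plast,\phiddensub)$ about $(\mlast_{t-1},\mhidden_{t-1})$: this linearises the pre-activation $a=\plast^\intercal h(\phiddensub;\xt)$ in the parameters, so the expectations under $q$ collapse to evaluations at the previous means. Plugging in the Bernoulli identities $\nabla_a\log\mathrm{Bern}(\yt\mid\sigma(a))=\yt-\sigma(a)$ and $-\nabla^2_a\log\mathrm{Bern}(\yt\mid\sigma(a))=\sigma'(a)$ (the exponential-family fact that the Fisher information $\sigma'(a)$ equals the variance of the sufficient statistic), with the chain rule $\nabla_{\plast}a=h(\phiddensub;\xt)$ (exact, since $a$ is linear in $\plast$) and $\nabla_{\phiddensub}a=\nabla_{\phiddensub}\!\big(\plast^\intercal h(\phiddensub;\xt)\big)$, and evaluating at $\phiddensub=\mhidden_{t-1}$, $\plast=\mlast_{t-1}$ yields exactly \eqref{eq:part-mlast-update}--\eqref{eq:part-covhidden-update}.

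The main obstacle, and the reason the statement only asserts an \emph{approximate} solution, is the last move: the stationarity relations above are genuinely implicit, because the expectations are taken under the same $q$ one is solving for, and the explicit recursion is obtained by (i) replacing $\E_q[\cdot]$ by point evaluation at the prior means, (ii) linearising the natural parameter rather than the mean response, and (iii) freezing the curvature term at the time-$(t-1)$ covariances instead of iterating to self-consistency. One should also flag that the mean-field product family forces the cross-covariance between $\plast$ and $\phiddensub$ to vanish, which is what makes the joint precision block-diagonal and the two recursions independent; this discards the posterior correlation between the last layer and the subspace. Everything else --- the two Gaussian identities, the quadratic form of the Gaussian priors, and the elementary derivatives of the sigmoid and the Bernoulli log-likelihood --- is routine bookkeeping.
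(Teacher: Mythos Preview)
Your derivation of the fixed-point equations via the factorised KL, Bonnet's and Price's identities, and the Bernoulli score/Fisher identities is exactly the route the paper takes (their Proposition~\ref{app: prop: fixed point} and Corollary~\ref{cor:rvga-order-2}). Where you diverge is in the final step. You write that after linearising the mean about $(\mlast_{t-1},\mhidden_{t-1})$ ``the expectations under $q$ collapse to evaluations at the previous means''; this is not quite right, because $q=\phi_t\,\varphi_t$ has mean $(\mlast_t,\mhidden_t)$, so the linearised gradient integrates to an affine function of $(\mlast_t,\mhidden_t)$ and the mean update remains \emph{implicit}. You then patch this by listing ``evaluate at prior means'' as a further approximation, which is a legitimate shortcut but a weaker argument than the paper's.

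The paper instead keeps the implicit equation (replacing only $\vdhidden$ by $\vdhidden[t-1]$ in the $\plast$-block), writes it as
\[
\big({\bf I}+\sigma'(\bar f_{t-1})\,\covlast_{t-1}\,F_{t,\plast}F_{t,\plast}^\intercal\big)\,\mlast_t
=\mlast_{t-1}+\covlast_{t-1}F_{t,\plast}\big(\yt-\sigma(\bar f_{t-1})+\sigma'(\bar f_{t-1})F_{t,\plast}^\intercal\mlast_{t-1}\big),
\]
recognises the left-hand matrix as $\covlast_{t-1}\covlast_t^{-1}$ by the already-derived precision update \eqref{eq:part-covlast-update}, and inverts to obtain \eqref{eq:part-mlast-update} \emph{exactly}. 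The payoff is that the explicit recursion is shown to be the exact fixed point of the linearised problem, not merely a one-step, prior-mean approximation to it; your route reaches the same formulas but does not exhibit this cancellation, so it understates what the linearisation actually buys.
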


As a corollary, a variant of {\ourmodel} can be derived
to model any other member of the exponential family by replacing
the mean and covariance of the target distribution of choice.
\noindent See \ref{app: ourmodel derivations} for a proof of Theorem \ref{theorem:subspace-last-rvga}.

\section{Asynchronous learning and decision making}
\label{sec:deployment-methods}
Next, we discuss how we deploy and evaluate the performance of the online {\ourmodel} methodology with asynchronous data.
In classical filtering problems, 
as soon as new information arrives at, say, time $t_i$, the parameters $\bm\theta_{t_i}$ of the model are updated. Next, when a new trade arrives at time $t_{i+1}$,
one uses the parameters $\bm\theta_{t_i}$ to estimate if a trade will be toxic.
In our setting, however, an update at time $t_{i+1}$ with $\bm\theta_{t_i}$ is only possible if
$t_{i+1}$ is greater than the time of last trade $t_i$ plus the toxicity horizon $\valuetoxichorizon > 0$, i.e.,
$t_{i + 1} > t_i + \valuetoxichorizon$. Otherwise, we use $\bm\theta_{t_j}$ to predict the probability of a toxic trade, with $j = \argmax_k t_{i + 1} > t_k + \valuetoxichorizon$.
Figure \ref{fig:diag-async-train-predict} illustrates this procedure.

\vspace{0.5cm}

\begin{figure}[H]
    \centering
    \begin{tikzpicture}
\draw[ultra thick, ->] (0, 0) -- (12cm, 0);

\foreach[count=\i] \x in {0, 1, 2.75, 4, 4.25, 4.5, 7.5}
\draw (\x cm, 3pt) -- (\x cm, -3 pt) node[below=3pt]{\scriptsize $t_\i$};

\foreach[count=\i] \x in {0, 1, 2.75, 4, 4.25, 4.5, 7.5}
\draw[*-, cyan] (\x, 0.5 + 0.2 * \i) -- (\x + 2.5, 0.5 + 0.2 * \i);

\foreach[count=\i] \x in {0, 1, 2.75, 4}
\draw[purple, densely dotted] (\x + 2.5, 2) -- (\x + 2.5, -1) node[below] {\scriptsize $\,\bm\theta_\i\,$};
\draw[purple, densely dotted] (0, -1) node[below] {\scriptsize $\bm\theta_0$};

\foreach[count=\i] \x in {4.25}
\draw[purple, densely dotted] (\x + 2.5, 2) -- (\x + 2.5, -1.25) node[below] {\scriptsize $\,\bm\theta_5\,$};

\foreach[count=\i] \x in {4.5}
\draw[purple, densely dotted] (\x + 2.5, 2) -- (\x + 2.5, -1) node[below] {\scriptsize $\,\bm\theta_6\,$};

\foreach[count=\i] \x in {7.5}
\draw[purple, densely dotted] (\x + 2.5, 2) -- (\x + 2.5, -1) node[below] {\scriptsize $\,\bm\theta_7\,$};
\end{tikzpicture}
    \vspace{-1cm}
    \caption{
    Asynchronous predict-update steps: trades arrive at irregular times $\{t_i\}_i$.
    An update to the model is only possible if $t_{i+1} > t_i + \valuetoxichorizon$, when we know
    whether the trade was toxic or benign. In this example, the model parameters $\bm\theta_0$
    are known at time $t_1$, when a new trade arrives. When a second trade arrives, at time $t_2$, we
    do not know whether the previous trade was toxic or benign at $t_1$, so we use the model weights $\bm\theta_0$
    to make a prediction. The next trade arrives at time $t_3 > t_1 + \valuetoxichorizon$, so we use $\bm\theta_1$ to make a prediction.
    Finally, multiple trades arrive consecutively at times $t_4$, $t_5$, and $t_6$, in which case we use
    $\bm\theta_2$. The last trade to arrive at time $t_7$ uses $\bm\theta_6$. In this example
    $\bm\theta_3$, $\bm\theta_4$, and $\bm\theta_5$ were never used to make a prediction because 
    {
    trades did not arrive during the period $[t_3 + \valuetoxichorizon, t_6 + \valuetoxichorizon)$.
    }
    }
    \label{fig:diag-async-train-predict}
\end{figure}

We employ the asynchronous online updating for \ourmodel\  and MLE and select hyperparameters over the warmup stage. 
Figure \ref{fig:pulse-description-ts} shows how {\ourmodel} updates model parameters based on:
current model parameters $\bm \theta$, features $\bf x$, and outcome $y$.
Here, $\bm \theta =  (\phidden, \plast)$ are the model parameters one uses to produce $p(y=1 \,\vert\, {\bf x}, {\bm\theta})$, with which we compute the prediction $\hat y$.
We employ the predictions $\hat{y}$ and the outcomes $y$ to compute the accuracy defined above.

\begin{figure}[H]
\centering
\scalebox{0.8}{\begin{tikzpicture}[scale=1]
    \tikzstyle{short_sym} = [ shorten >=20, shorten <=20 ]
    
    \draw[ultra thick, ->] (-3, -1.2) -- (4, -1.2);
    
    \draw
    (-3, -1.2cm - 5pt) --
    (-3, -1.2cm + 5pt)
        node[below, yshift=-10pt] {$s$};
    
    \draw
    (0, -1.2cm - 5pt) --
    (0, -1.2cm + 5pt)
    node[below, yshift=-10pt] {$s + \valuetoxichorizon$};

    \draw
    (3, -1.2cm - 5pt) --
    (3, -1.2cm + 5pt)
        node[below, yshift=-10pt] {$t$};

    \draw[thick, short_sym, ->]
    (-3, 2.5) node[below] {${\bf x}_s$}
    to[out=230, in=90]
    (-3, -1);
    
    \draw (-3, 1) node[] {$\boldsymbol{\theta}_s$};
    
    \draw[thick, short_sym, ->]
    (-3, 1)
    --
    (-3, -1) node[above] {$\hat{y}_s$};

    \draw[thick, short_sym, ->]
    (-3, -1)
    --
    (0, 1);
    
    \draw (0, 1) node {$\boldsymbol{\theta}_{s + \valuetoxichorizon}$};
    
    \draw[thick, short_sym, ->]
    (0, -1) node[above] {$y_s$}
    --
    (0, 1);
    
    \draw[thick, short_sym, ->]
    (-3, 1)
    --
    (0, 1);
    
    \draw[thick, short_sym, ->]
    (3, 2.5) node[below] {${\bf x}_t$}
    to[out=230, in=90]
    (3, -1);
    
    \draw (3, 1) node {$\boldsymbol{\theta}_t$};
    
    \draw[thick, short_sym, ->]
    (3, 1)
    --
    (3, -1) node[above] {$\hat{y}_t$};
    
    \draw[thick, short_sym, ->, dashed]
    (0, 1)
    --
    (3, 1);    
\end{tikzpicture}}
\caption{
    {\ourmodel} update procedure. For simplicity, we take $s+\valuetoxichorizon < t$. 
}
\label{fig:pulse-description-ts}
\end{figure}

\subsection{Model for decision making}\label{sec: model for decision making}
Here, we devise brokerage strategies that employ predictions of toxic flow.  
We introduce a  one-shot optimisation problem for the broker's strategy to internalise-externalise trades.

For method $\text{M} \in \{\text{\ourmodel}, \text{ LogR}, \text{ RF}, \text{ MLE}\}$, let
$p^{+,\text{M}}\in (0,1)$, denote the probability that a buy order will be toxic
and let $p^{-,\text{M}}$ denote the probability that a sell order will be toxic.
Note that $p^{+,\text{M}} + p^{-,\text{M}}$ does not necessarily add to $1$.
Let $\spread/2 >0$ denote the half bid-ask spread and let $\eta>0$ denote the shock to the midprice $S$ if the trade is toxic;
here we assume that $\eta\in(\spread, \infty)$.
The broker controls $\delta^{\pm}\in\{0,1\}$. When $\delta^{\pm} = 0$  the broker externalises the trade
and when $\delta^{\pm} = 1$ the broker internalises the trade.
The inventory of the broker is $Q\in \mathbb{R}$;
when $Q > 0$ the broker is long and
when $Q < 0$ the broker is short. Assume all trades are for one unit of the asset.
Then, the broker solves
\begin{align}\label{eq:trading-problem}
   \delta^{\pm *} = \argmax_{\delta^{\pm}\{0,1\}}
   \mathbb{E}\left[
   \underbrace{
   \overbrace{\pm\delta^{\pm}\,(S \pm \spread/2)}^\text{cash flow} + 
   \overbrace{(S \pm \eta\,Z)\left(Q \mp \delta^{\pm}\right)}^\text{inventory valuation}
   }_\text{mark-to-market}
   \underbrace{ -
   \phi\,\left(Q\mp \delta^\pm\right)^2 }_{\text{inventory penalty}}
   \right]\,,
\end{align}
where $Z$ is a Bernoulli random variable with parameter $p^{\pm,\text{M}}$, and $\phi\geq 0$ is an inventory penalty parameter.
Intuitively, the broker optimises the expected mark-to-market value of her portfolio after internalising the trade adjusted by a quadratic penalty on inventory.
The solutions to \eqref{eq:trading-problem} are
\begin{equation}\label{eq: optimal strategy delta^pm}
    \delta^{\pm *} = \mathds{1}\left(\frac{\spread/2}{\eta} - \frac{\phi}{\eta} \pm \frac{2\,\phi}{\eta}\,Q  > p^{\pm,\text{M}}\right) = \mathds{1}\left( \cutoff \pm \Phi\,Q > p^{\pm, \text{M}}\right)\,,
\end{equation}
where $\cutoff :=  {\spread}/{2\,\eta} - \phi/{\eta}$ and $\Phi:= 2\,\phi/\eta$.
We call $\Phi$ the inventory aversion parameter and we call $\cutoff $ the {\cutoffprobability}. The strategy internalises trades when the prediction $p^{\pm, \text{M}}$ is lower than the {\cutoffprobability} $\cutoff$ adjusted by the inventory of the broker $Q$ and the inventory  aversion parameter $\Phi$.
When either
\begin{equation*}
    \cutoff + \Phi\,Q  = p^{+,\text{M}}\qquad \text{ or }\qquad \cutoff - \Phi\,Q  = p^{-,\text{M}}\,,
\end{equation*}
the broker is indifferent between internalising or externalising the trade in the market; this happens with probability zero.

{
The model variables $S$, $\eta$, and $\phi$ in \eqref{eq:trading-problem} are not calibrated in our experiments. 
Their role here is to motivate the broker’s decision rule \eqref{eq: optimal strategy delta^pm} in terms of the cutoff probability $\cutoff$ and the inventory aversion parameter $\Phi$. 
}

Next, we study the case $\Phi =0$ in more detail. In \ref{app: amb aversion} we explore the case $\Phi >0$.

\subsection{Internalise-externalise strategy}
Motivated by the mathematical framework in Subsection \ref{sec: model for decision making}, below we introduce a family of predictions of toxicity based on the {\cutoffprobability} $\cutoff\in [0,1]$. In what follows, we ignore the permanent price impact of externalising trades, as this  would require the formulation of a stochastic control problem.\footnote{ 
For a series of recent related papers on broker-client stochastic control problems see \cite{cartea2022brokers,bergault2025mean,cartea2024nash,donnelly2025liquidity,wu2024broker}. In these papers, the externalisation activity of the broker has a permanent price impact in prices.  There, trades are toxic because informed traders observe the stochastic drift of the asset price and exploit that information optimally. The problem then is to characterise the equilibrium between the broker and the informed traders (the uninformed trader is usually assumed to be non-strategic). 
One can also allow the broker to have flexibility in the skewing mechanism,
as opposed to using bid and ask quotes that are aligned to those in the lit market,
see e.g., \cite{barzykin2023algorithmic,barzykin2022market}. Furthermore, the broker can employ filtering techniques to identify toxic trades. For example, if skewed quotes which are ``expensive'' are hid by a client,
this may convey information about the toxicity of the trade;
see e.g., \cite{aqsha2024strategic} where the authors filter the trading speed of informed traders in search of the signal. } 
Below, when deploying our strategies, the historical data do not change.

\begin{definition}[$\cutoff$-predicted toxic trade]
Let $\cutoff \in [0,1]$ and $p(y = 1\,\vert\, \xt[t_n], {\bm\theta})$ be the output of a classifier.
A trade is predicted to be toxic with {\cutoffprobability} $\cutoff$ if
\begin{equation}
    p(y=1 \,\vert\, \xt[t_n], {\bm\theta}) > \cutoff\,.
\end{equation}
We store the decision of a toxic trade in the variable
\begin{equation}
    \hat y_{t_n}^\cutoff = \mathds{1}(p(y=1 \,\vert\, \xt[t_n], {\bm\theta}) > \cutoff)\,.
\end{equation}

\end{definition}

We are interested in the predictive performance of the models as we vary the value of $\cutoff$.
To this end, let $y_{t_n} \in \{0, 1\}$ denote if a trade executed at time $t_n$ was toxic at time $t_n+\valuetoxichorizon$
($y_{t_n}=1$ if toxic and $y_{t_n}=0$ otherwise).
We employ the true positive rate and the false positive rate, which we define below.

\begin{definition}\label{def:tpr}
    The true positive rate (TPR) of a sequence of trades $\{y_{t_n}\}_{n=1}^N$ with predictions $\{\hat y_{t_n}\}_{n=1}^N$
 at a {\cutoffprobability} $\cutoff$ is
    \begin{equation}
    \begin{aligned}
        \text{TPR}_\cutoff
        &= \frac
            {\sum_{n=1}^N\mathds{1}(y_{t_n} = \hat{y}_{t_n}^\cutoff) \cdot \mathds{1}(y_{t_n} = 1)}
            {\sum_{n=1}^N\mathds{1}(y_{t_n} = 1)}\,.
    \end{aligned}
    \end{equation}
\end{definition}

\begin{definition}\label{def:fpr}
    The false positive rate (FPR) of a sequence of trades $\{y_{t_n}\}_{n=1}^N$ with predictions $\{\hat y_{t_n}\}_{n=1}^N$
 at a {\cutoffprobability} $\cutoff$ is
    \begin{equation}
    \begin{aligned}
        \text{FPR}_\cutoff
        &= \frac
            {\sum_{n=1}^N\mathds{1}(y_{t_n} \neq \hat{y}_{t_n}^\cutoff) \cdot \mathds{1}(y_{t_n} = 0)}
            {\sum_{n=1}^N\mathds{1}(y_{t_n} = 0)}\,.
    \end{aligned}
    \end{equation}
\end{definition}

Each choice of $\cutoff$ induces a pair of values $(\text{FPR}_\cutoff, \text{TPR}_\cutoff)$. The graph of $\cutoff\to(\text{FPR}_\cutoff, \text{TPR}_\cutoff)$ is known as
the Receiver Operating Characteristic (ROC). Figure \ref{fig:roc-30s} shows the daily
ROC of the models in the deploy stage with {\toxichorizon} of 30s.

\begin{figure}[H]
    \centering
    \includegraphics[width=0.6\linewidth]{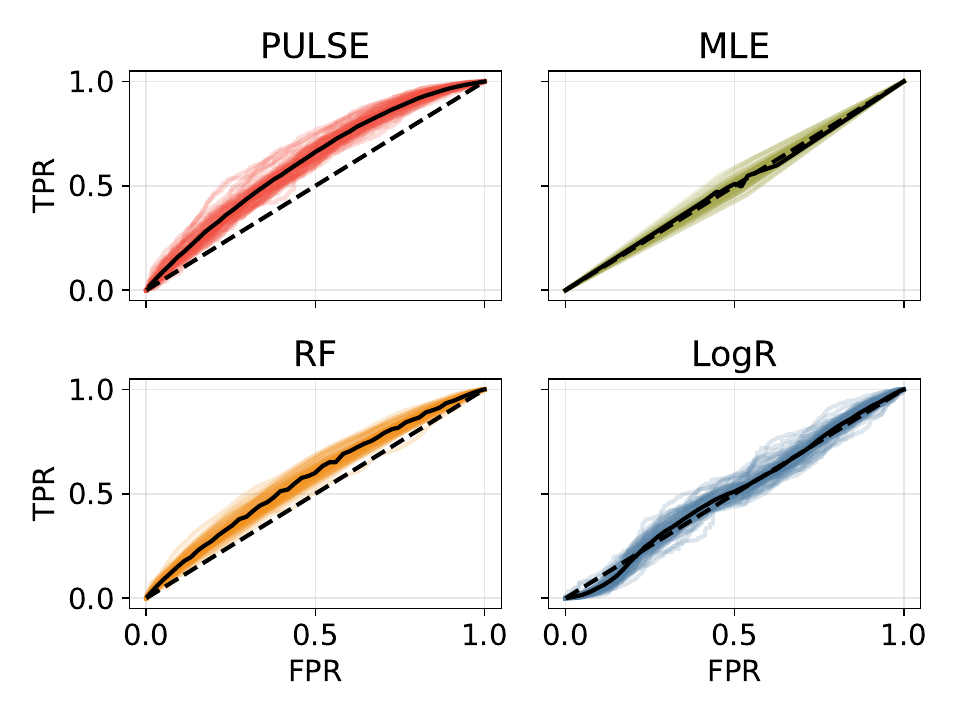}
    \caption{
    Daily ROC curves with {\toxichorizon} of 30s. We plot the daily ROC curve for each model
    at the end of each trading day. Each coloured line represents the ROC
    curve for a trading day. The solid black line is the average of the daily
    ROC curves. Finally, the black dashed line represents the ROC curve for
    a random classifier.
    }
    \label{fig:roc-30s}
\end{figure}

The area under an ROC curve, called AUC, is used in the machine learning literature to compare
classifiers; see, e.g., \cite{fawcett-roc}.
Intuitively, the AUC is a measure to quantify a classifier's ability to distinguish
between toxic and benign trades.

\section{Experiments}
\label{section:traders-real}
We employ the methodology developed in the previous section with the following configuration.
The NNet for {\ourmodel} is an MLP with three hidden layers, 100 units in each layer, and ReLU activation function. The
number of epochs $E$ is 850, we 
skip the first $50$ iterations of the optimisation procedure, the
subspace dimension is $d=20$, the
learning rate is $\alpha = 10^{-7}$, and we
store gradients every $k=4$ steps.\footnote{Our results are robust to higher subspace dimensions. In particular, we find that using a dimension up to $d=1000$ does not yield a statistically significant outperformance over the baseline with $d=20$.}
With this configuration, we estimate the  $38,700$ units of the MLP  during the warmup stage. For the deployment stage,
{\ourmodel} updates $120$ degrees of freedom;
this accounts for less than half a percent of all parameters updated during the warmup stage.
From a practical perspective,
the memory cost of a single step of the algorithm (during the deployment stage)  is $O((L + D)^2)$.
In this paper, an update requires less than 1mb of memory because each unit is a 32bit float.
Conversely, if we did not employ the subspace approach
a single step would require 190gb of memory, making it infeasible to run on typical GPU devices.

We divide the dataset into a warmup dataset ${\cal D}_\text{warmup}$ and a deploy dataset ${\cal D}_\text{deploy}$. 
Here, ${\cal D}_\text{warmup}$ is from 28 June 2022 to 29 July 2022,
and ${\cal D}_\text{deploy}$ is from
1 August 2022 to 21 October 2022.

\subsection{Benchmark models}
We compare the performance of four methods: {\ourmodel}, logistic regression (LogR), random forests (RF),
and a recursively updated maximum-likelihood estimator of a Bernoulli-distributed random variable (MLE).
The MLE benchmark reflects a common industry practice in which toxicity is treated as a client-level attribute rather than a trade-specific one.
We include LogR as a widely used linear baseline that helps quantify the added value of employing non-linear models such as neural networks.
Finally, we include RF, a strong tree-based nonparametric method that is competitive on tabular data \citep[e.g.,][]{mcelfresh2023neural}.

With logistic regression, the probability that a trade is toxic is
\begin{equation}\label{eq:logistic-regression}
    p(y \,\vert\, \plast_{0};\, \xt)
    = \text{Bern}\Big(y \,\vert\, \sigma(\plast_0^\intercal\, \xt)\Big),\qquad y\in\{0,1\}\,,
\end{equation}
where $\plast_0$ is estimated maximising the log-likelihood with L-BFGS; see \cite{liu89-lbfgs}.

Next, RF is a bootstrap-aggregated collection of de-correlated trees.
To predict if a trade is toxic one uses the average over the individual trees in the ensemble,
see Section 15.1 of \cite{hastie01-esl}.

Further, we model the unconditional probability  of a toxic trade as a Bernoulli-distributed random variable
with mass function
\begin{equation}\label{eq:bernoulli}
    p(y \,\vert\, \pi)
    = \text{Bern}\Big(y \,\vert\, \pi\Big),\qquad y\in\{0,1\}\,.
\end{equation}
The maximum-likelihood estimator of the parameter $\pi$, given a collection $\{z_1, \ldots, z_N\}$
of Bernoulli-distributed samples, where $z_n \in \{0, 1\}$, is given by
\begin{equation*}
    \pi_\text{MLE} = \frac{1}{N}\,\sum_{n=1}^N \mathbbm{1}(z_n = 1)\,;
\end{equation*}
here, $\mathds{1}(\cdot)$ is the indicator function and
we refer to this estimator as the MLE method.
This quantity is updated after each new observation $\yt$.

The decision rule in \eqref{eq: optimal strategy delta^pm} is directional and therefore induces two conditional problems:
one for buy trades and one for sell trades.
We consequently learn side-specific feature transformations and parameters. There are three implementations:
\begin{align}
& {\bf x}_\text{bid} \to {\cal M}_{\text{bid}} \to y,\;\; {\bf x}_\text{ask} \to {\cal M}_{\text{ask}} \to y,
\label{eq:choice-model}\\
& [{\bf x}_\text{bid},\, {\bf x}_\text{ask}] \to {\cal M} \to [y_{\text{bid}},y_{\text{ask}}],\\
& [{\bf x},\text{bid/ask}] \to {\cal M} \to y,
\end{align}
where ${\cal M}$ is for model.
We adopt \eqref{eq:choice-model} because it  enables a direct comparison with the baselines (LogR, RF, MLE).

In \ref{app-section:value-of-single-model}  we study the performance when we train one model per client
and use the client's unique features.

\textbf{Online evaluation.}
We evaluate MLE and {\ourmodel}  online because both admit single-pass updates compatible with asynchronous labels.
Our pipeline also supports online logistic regression;
however, in our data, LogR underperforms even with batch retraining, yielding score distributions concentrated near the base rate,
so online updates would not improve its behaviour.
Fully online RF would require incremental trees or replay buffers which require more compute power takes longer to run. 
For robustness, in \ref{sec:weekly-retrains} we re-train LogR and RF weekly on accrued data.

\subsection{Model comparison}
In this section, we analyse the AUC of the methods for various \toxichorizon{s}.
For each method, we compute the AUC for the sequence of trades of each day
and show a density plot of such values in Figure \ref{fig:auc-bday-btoxic-horizon};
recall that the deploy window is between 1 August 2022 and 21 October 2022.

\begin{figure}[H]
    \centering
    \includegraphics[width=0.45\linewidth]{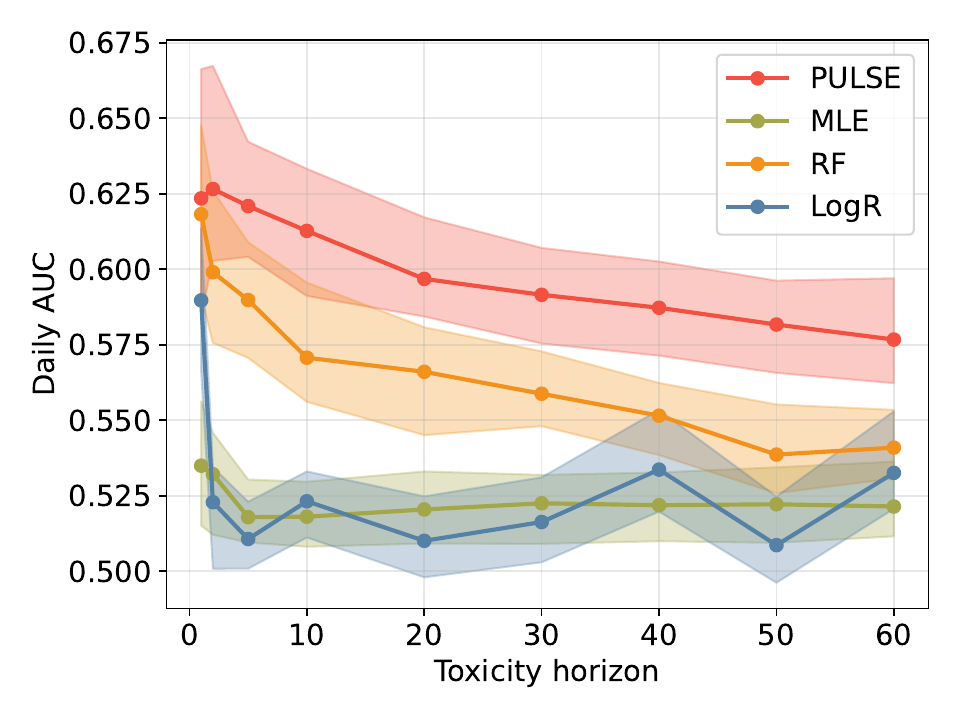}
    \includegraphics[width=0.45\linewidth]{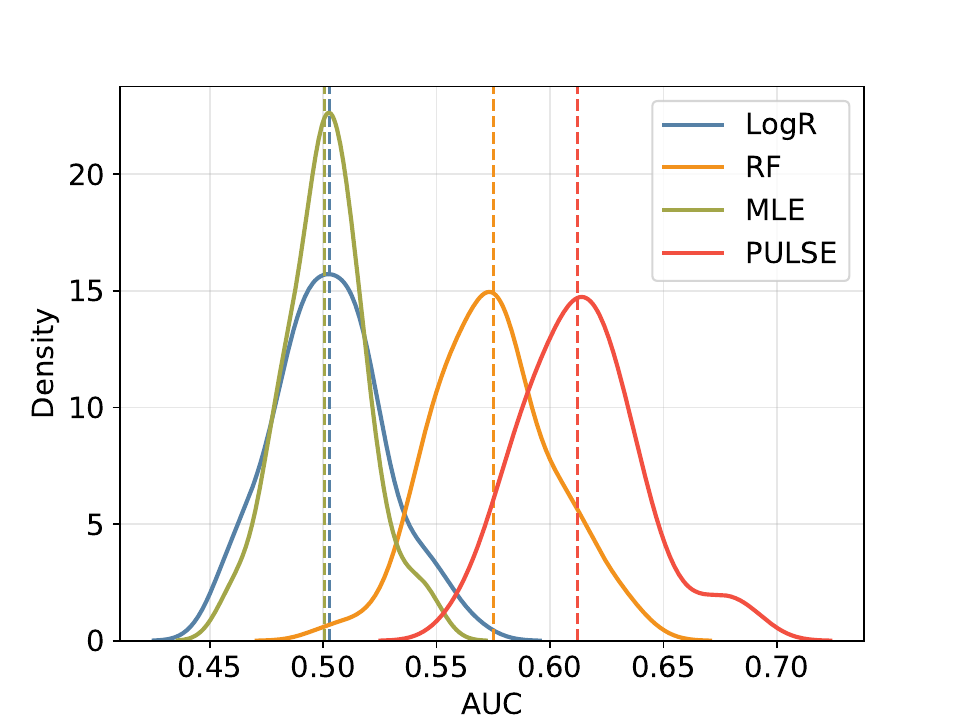}
    \caption{
    Left panel shows the median (line) and interquartile range (shaded) of daily AUC by toxicity horizon and model.
    Right panel shows  the daily AUC of all models for a {\toxichorizon} of 30s. The dashed vertical lines correspond to the mean daily value.
    EUR/USD currency pair over the period 1 August 2022 to 21 October 2022.
    }
    \label{fig:auc-bday-btoxic-horizon}
\end{figure}

\ourmodel\ has the highest average AUC among the four methods
for all \toxichorizon{s}. Furthermore, as the toxic
horizon increases, the AUC of {\ourmodel} and RF decrease but the outperformance of {\ourmodel} over RF 
increases. MLE and LogR have the poorest performance.
We observe that the AUC for PULSE and RF decreases as the toxicity horizon increases, 
reflecting  that the prediction problem is more difficult over longer horizons. 
Empirically, the variance of profitability increases with time (see Figure \ref{fig:sharpness-profile}), 
which reduces the predictive power of these methods. 
For MLE, the AUC remains close to 0.5 across horizons, 
highlighting that it acts as a random classifier.
On the other hand, LogR  shows high variability across horizons, 
mainly because its linear structure is not well-suited to capture the dynamics of the data, 
leading to poor calibration and unstable performance.

Figure \ref{fig:auc-hist-models} shows the five-day exponentially-weighted average
of the AUC for each day. RF, PULSE, and MLE attain their maximum values
at the beginning of the deploy period and then decay over time, while
{\ourmodel} maintains a steady performance
 because {\ourmodel} updates its parameters with each new
observation.
The performance of MLE and LogR is similar.
In our data, LogR does not find a useful linear boundary.
Consequently, LogR's predicted probabilities cluster near the unconditional
rate and are highly correlated with MLE’s, yielding little time variation.

\begin{figure}[H]
    \centering
    \includegraphics[width=0.6\linewidth]{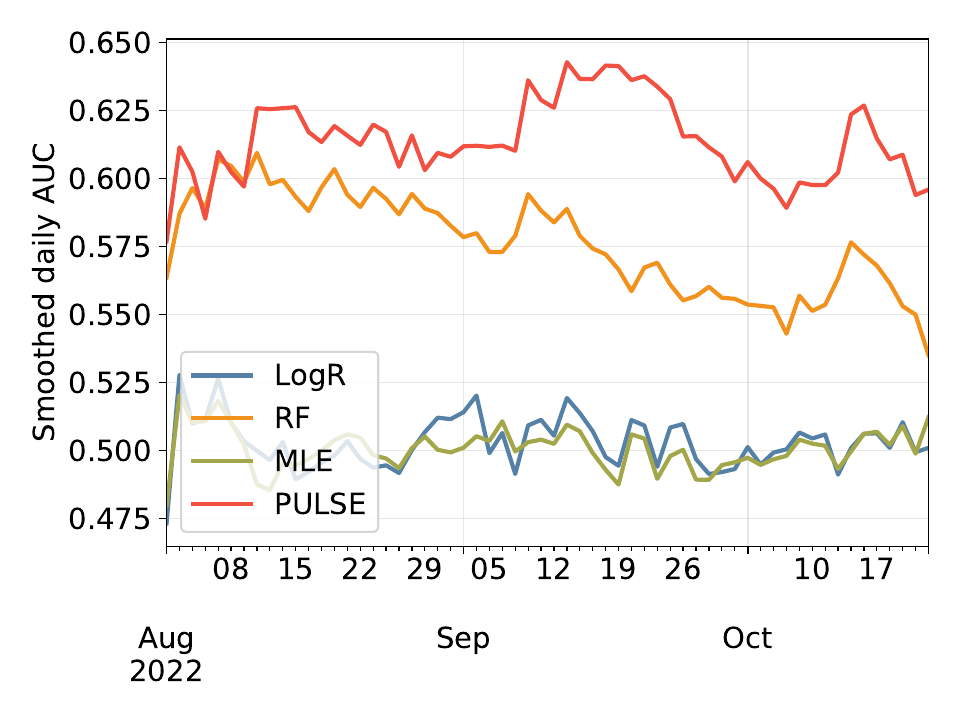}
    \caption{
    Five-day exponentially-weighted moving average of AUC over time. The {\toxichorizon} is 30 seconds. 
    }
    \label{fig:auc-hist-models}
\end{figure}
Figure \ref{fig:pulse-auc-hist-models} shows a five-day exponentially-weighted moving average (with decay $1/3$) 
of the AUC for the various {\toxichorizon}s across time.
\begin{figure}[H]
    \centering
    \includegraphics[width=0.6\linewidth]{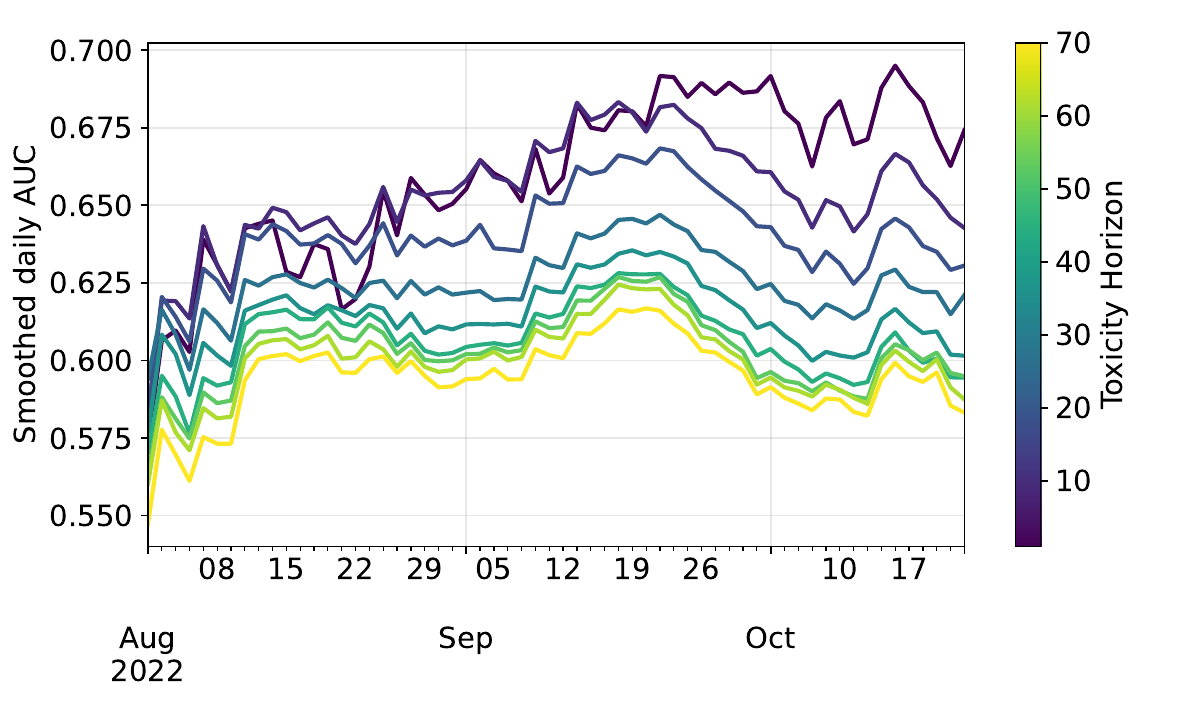}
    \caption{
    Five-day exponentially-weighted moving average of AUC over time for {\ourmodel} across {\toxichorizon}s.
    }
    \label{fig:pulse-auc-hist-models}
\end{figure}
We observe that as the {\toxichorizon} increases, the maximum AUC decreases almost uniformly.

\subsection{Trade Prediction Effectiveness and Missed Opportunities}
\label{sec:trade-prediction}
We employ data for the EUR/USD currency pair over the period 1 August 2022 to 21 October 2022 to test the internalisation-externalisation strategy.
As above,
we ignore inventory aversion (i.e., $\Phi=0$).

In what follows, we define the avoided profit of an externalised trade to be the PnL of that trade had it been internalised with unwinding at the end of the {\toxichorizon}.
Figure \ref{fig:pnl-cross-avoided-loss} reports the PnL ($y$-axis) of the internalised trades and the avoided profit ($x$-axis) of the externalised trades when the broker uses the internalisation-externalisation strategy  
\eqref{eq: optimal strategy delta^pm}  and for {\ourmodel}, MLE, LogR, and RF.
The points shown for each method and {\toxichorizon} are the ones that maximised PnL over all possible cutoff probabilities $\cutoff\in \{0.05,0.15,0.25, \ldots, 0.95\}$.

The broker starts with zero inventory at the beginning of 1 August 2022 and she crosses the spread to unwind the internalised trades at the end of the {\toxichorizon}.
We keep track of the PnL they would have obtained over the toxicity horizon.\footnote{Studying the variance of these results is more challenging and left for future research.}
The inventory is in euros (\euro) and the PnL is in dollars (\$).
Each trade is for the median quantity which is \euro 2,000 in our dataset.
When a trade is toxic, the median loss to unwind the position is  $\$ 7\times 10^{-5}$ per  euro traded and when the trade is not toxic, the median profit is  $\$ 8\times 10^{-5}$ per euro traded.

\begin{figure}[H]
    \centering
    \includegraphics[width=0.6\linewidth]{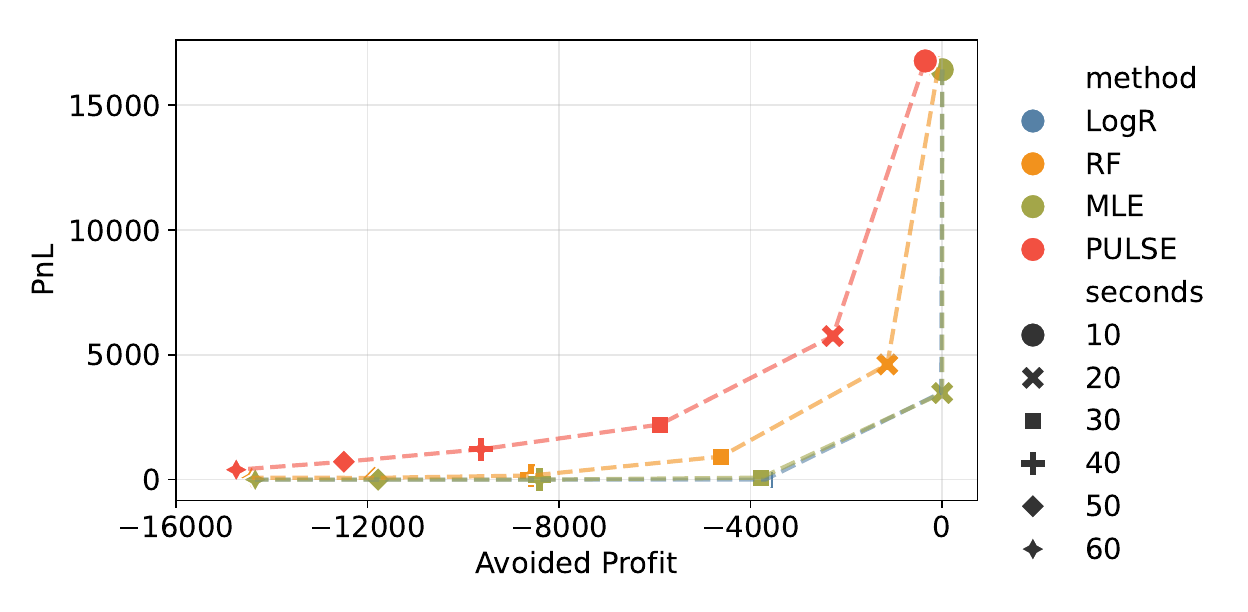}
    \caption{
        PnL and avoided profit for various {\toxichorizon}s  and for {\ourmodel}, MLE, LogR, and RF. Each point shows the highest possible PnL for a given method and {\toxichorizon} where the maximum is taken across cutoff probability where $\cutoff\in \{0.05,0.15,0.25, \ldots, 0.95\}$.
    }
    \label{fig:pnl-cross-avoided-loss}
\end{figure}

For each {\toxichorizon}, the internalisation-externalisation strategy informed by \ourmodel attains the highest PnL and the lowest
avoided profits,\footnote{{A negative avoided profit is an incurred  loss had the trade been internalised and unwound at the toxicity horizon.}}
see red dots joined by the dash line.
{These results show the added economic value that one obtains when informing trading strategy \eqref{eq: optimal strategy delta^pm} with the predictions made by {\ourmodel}. Indeed, we show that the higher quality of predictions obtained by {\ourmodel} produce higher PnL and lower avoided profits. }

Next,
Figure \ref{fig:predicted-probas} shows a histogram of $p^{\pm,\,\text{M}}$ for $\text{M}\in\{\text{{\ourmodel}}, \text{MLE}, \text{LogR}, \text{RF}\}$ and {\toxichorizon} of 10s
and
Figure \ref{fig:pct-int-vol} shows how the percentage of internalised volume depends on the  {\cutoffprobability}.
\begin{figure}[H]
    \centering
    \includegraphics[width=0.8\linewidth]{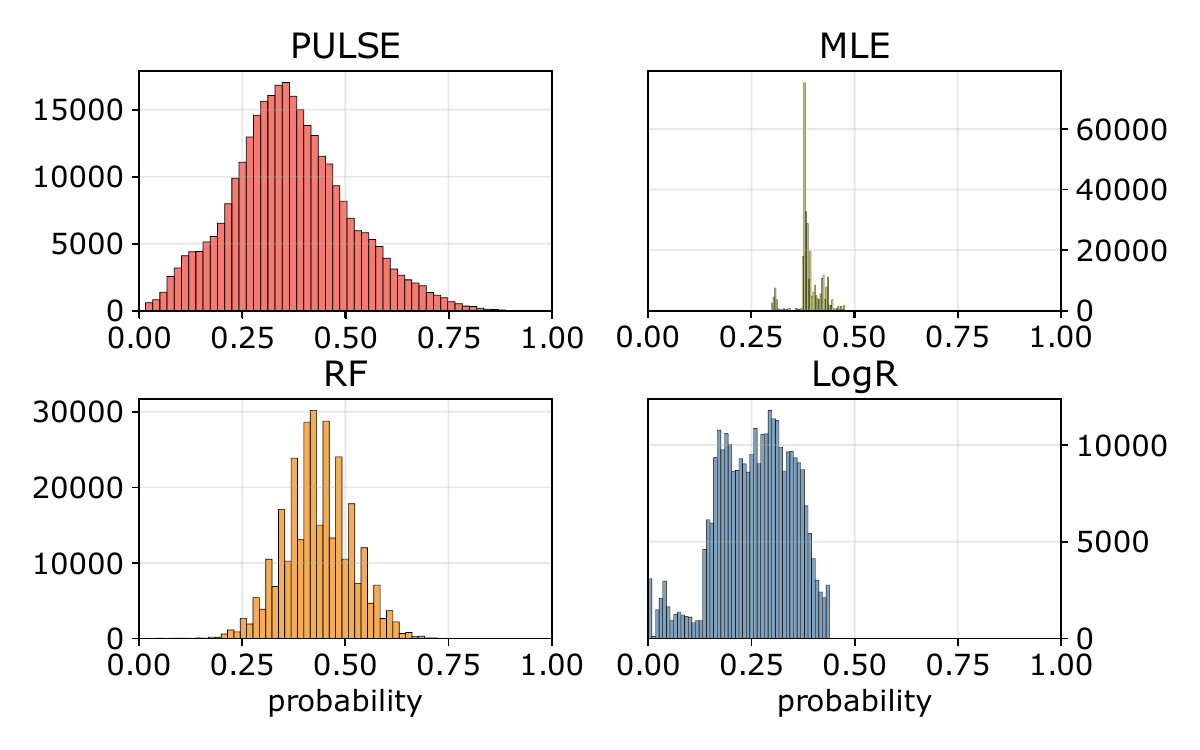}
    \caption{
    Histograms of predicted toxicity probabilities by model (pooled across clients). 
    Toxicity horizon is ten seconds.
    EUR/USD currency pair over the period 1 August 2022 to 21 October 2022.
    }
    \label{fig:predicted-probas}
\end{figure}

{
MLE produces an almost constant score (around the unconditional toxicity rate).
For LogR, the concentration of probabilities below $0.5$
is due to the nonlinear structure of the data.
Consequently, when the cutoff crosses these concentrated values, the internalised volume changes abruptly.
In contrast, {\ourmodel} produces a broader,
more dispersed (higher-entropy) distribution of probabilities, enabling smoother and more informative policy adjustments.
}

\begin{figure}[H]
    \centering
    \includegraphics[width=0.5\linewidth]{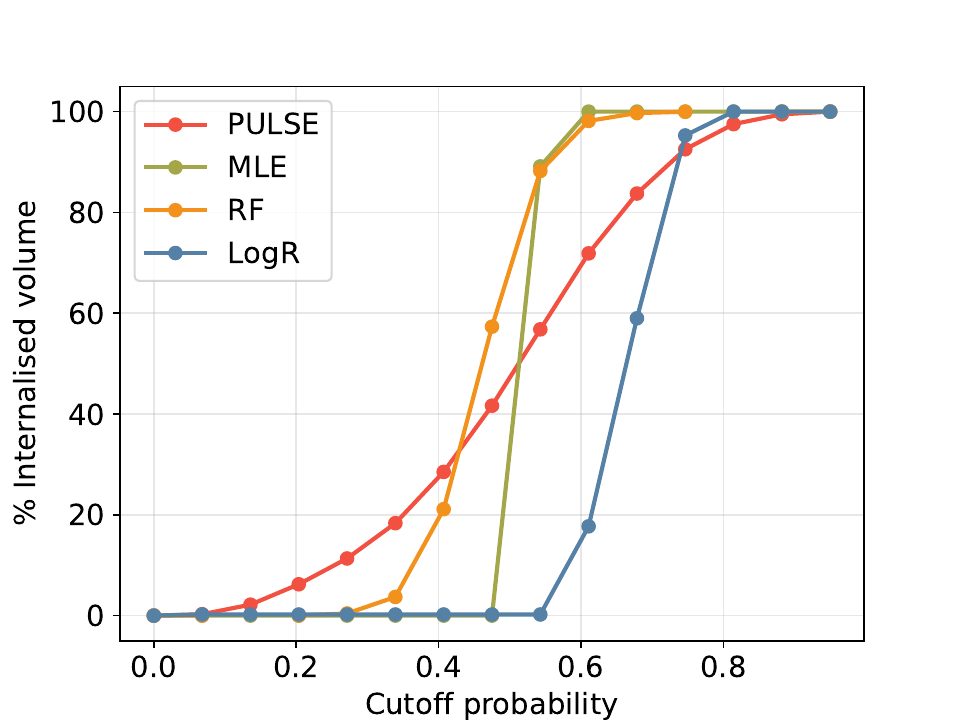}
    \caption{
        Percentage of internalised volume as a function of the  {\cutoffprobability} $\cutoff$
        for {\toxichorizon} of 20s.
        EUR/USD currency pair over the period 1 August 2022 to 21 October 2022.
    }
    \label{fig:pct-int-vol}
\end{figure}

{
Figure~\ref{fig:pct-int-vol} plots the percentage of internalised volume as a function of the cutoff $\cutoff$.
Step-like moves occur when $\cutoff$ crosses regions where scores are concentrated.
For MLE, scores are almost constant at the unconditional toxicity rate (as seen in the previous histogram), thus, the percentage of internalised volume jumps when $\cutoff$ crosses that level.
For LogR, scores cluster near the base rate, producing a sharp transition similar to that of MLE.
RF shows smaller but noticeable jumps, reflecting its narrower score dispersion.
In contrast, {\ourmodel} produces a more dispersed score distribution and therefore a smoother curve. See \ref{sec: precision recall} for an analysis on precision and recall.
}

\section{Conclusions}
\label{sec:conclusions}
We employed  machine learning and statistical methods to detect toxic flow.
We also developed a novel method for online training of neural networks, which we call {\ourmodel}.
We use {\ourmodel} to estimate the parameters of a neural network (sequentially)
that computes the probability that an incoming trade will be toxic. 
The out-of-sample performance of the multilayered-perceptron (MLP) trained with {\ourmodel} is high, stable, and outperforms
the other methods we considered.

We proposed a broker's strategy that uses these predictions to decide which trades are internalised and which are externalised by the broker.
The mean PnL of the internalise-externalise strategy we obtain
when training the MLP with {\ourmodel}
is the highest (when compared with the benchmarks)
and it is robust to model parameter choices.
Future research will consider a hierarchical version of the problem, where there is a structure for toxicity common to
all traders. 
The methodology proposed by {\ourmodel}
can also be used in other areas of finance
where sequential updates are desirable,
such as in the prediction of fill-rate probabilities, and in multi-armed bandit problems for trading; see, e.g., \cite{arroyo2023deep} and \cite{cartea2023bandits}.

\section*{Comments}
\noindent For the purpose of open access, the authors have applied a CC BY public copyright licence to any author accepted manuscript version arising from this submission.

\section*{Funding}
\noindent  No funding was received.

\section*{Disclosure of interest}
\noindent  There are no interests to declare.

\bibliographystyle{apalike}
\bibliography{References}

@article{west2022best,
  title={Best practice in statistics: The use of log transformation},
  author={West, Robert M},
  journal={Annals of clinical biochemistry},
  volume={59},
  number={3},
  pages={162--165},
  year={2022},
  publisher={SAGE Publications Sage UK: London, England}
}

@article{mcelfresh2023neural,
  title={When do neural nets outperform boosted trees on tabular data?},
  author={McElfresh, Duncan and Khandagale, Sujay and Valverde, Jonathan and Prasad C, Vishak and Ramakrishnan, Ganesh and Goldblum, Micah and White, Colin},
  journal={Advances in Neural Information Processing Systems},
  volume={36},
  pages={76336--76369},
  year={2023}
}

@article{barzykin2022market,
  title={Market-making by a foreign exchange dealer},
  author={Barzykin, Alexander and Bergault, Philippe and Gu{\'e}ant, Olivier},
  journal={Risk (Cutting Edge)},
  year={2022}
}

@article{bergault2025mean,
  title={A mean field game between informed traders and a broker},
  author={Bergault, Philippe and S{\'a}nchez-Betancourt, Leandro},
  journal={SIAM Journal on Financial Mathematics},
  volume={16},
  number={2},
  pages={358--388},
  year={2025},
  publisher={SIAM}
}

@article{barzykin2023algorithmic,
  title={Algorithmic market making in dealer markets with hedging and market impact},
  author={Barzykin, Alexander and Bergault, Philippe and Gu{\'e}ant, Olivier},
  journal={Mathematical Finance},
  volume={33},
  number={1},
  pages={41--79},
  year={2023},
  publisher={Wiley Online Library}
}

@article{aqsha2024strategic,
  title={Strategic learning and trading in broker-mediated markets},
  author={Aqsha, Alif and Drissi, Fay{\c{c}}al and S{\'a}nchez-Betancourt, Leandro},
  journal={arXiv preprint arXiv:2412.20847},
  year={2024}
}

@article{wu2024broker,
  title={Broker-Trader Partial Information Nash-Equilibria},
  author={Wu, Xuchen and Jaimungal, Sebastian},
  journal={arXiv preprint arXiv:2412.17712},
  year={2024}
}

@article{donnelly2025liquidity,
  title={Liquidity competition between brokers and an informed trader},
  author={Donnelly, Ryan and Li, Zi},
  journal={arXiv preprint arXiv:2503.08287},
  year={2025}
}

@article{cartea2024nash,
title={Nash equilibrium between brokers and traders},
author={Cartea, {\'A}lvaro and Jaimungal, Sebastian and S{\'a}nchez-Betancourt, Leandro},
journal={Finance and Stochastics, to appear, arXiv:2407.10561},
year={2025+}
}

@article{fawcett-roc,
title = {An introduction to ROC analysis},
journal = {Pattern Recognition Letters},
volume = {27},
number = {8},
pages = {861-874},
year = {2006},
note = {ROC Analysis in Pattern Recognition},
issn = {0167-8655},
doi = {https://doi.org/10.1016/j.patrec.2005.10.010},
url = {https://www.sciencedirect.com/science/article/pii/S016786550500303X},
author = {Tom Fawcett},
keywords = {ROC analysis, Classifier evaluation, Evaluation metrics},
abstract = {Receiver operating characteristics (ROC) graphs are useful for organizing classifiers and visualizing their performance. ROC graphs are commonly used in medical decision making, and in recent years have been used increasingly in machine learning and data mining research. Although ROC graphs are apparently simple, there are some common misconceptions and pitfalls when using them in practice. The purpose of this article is to serve as an introduction to ROC graphs and as a guide for using them in research.}
}

@article{cartea2023optimal,
  title={Optimal execution with stochastic delay},
  author={Cartea, {\'A}lvaro and S{\'a}nchez-Betancourt, Leandro},
  journal={Finance and Stochastics},
  volume={27},
  number={1},
  pages={1--47},
  year={2023},
  publisher={Springer}
}

@article{theonlygameintown,
author = {Walter Bagehot},
title = {The Only Game in Town},
journal = {Financial Analysts Journal},
volume = {27},
number = {2},
pages = {12-14},
year  = {1971},
publisher = {Routledge},
doi = {10.2469/faj.v27.n2.12},
URL = { https://doi.org/10.2469/faj.v27.n2.12},
eprint = {  https://doi.org/10.2469/faj.v27.n2.12}
}

@article{easley1996liquidity,
  title={Liquidity, information, and infrequently traded stocks},
  author={Easley, David and Kiefer, Nicholas M and O'{H}ara, Maureen and Paperman, Joseph B},
  journal={The Journal of Finance},
  volume={51},
  number={4},
  pages={1405--1436},
  year={1996},
  publisher={Wiley Online Library}
}

@misc{frankle2019lottery,
      title={The Lottery Ticket Hypothesis: Finding Sparse, Trainable Neural Networks}, 
      author={Jonathan Frankle and Michael Carbin},
      year={2019},
      eprint={1803.03635},
      archivePrefix={arXiv},
      primaryClass={cs.LG}
}

@misc{li2018subspacenn,
      title={Measuring the Intrinsic Dimension of Objective Landscapes}, 
      author={Chunyuan Li and Heerad Farkhoor and Rosanne Liu and Jason Yosinski},
      year={2018},
      eprint={1804.08838},
      archivePrefix={arXiv},
      primaryClass={cs.LG}
}

@article{copelandgalai,
 ISSN = {00221082, 15406261},
 URL = {http://www.jstor.org/stable/2327580},
 abstract = {An individual who chooses to serve as a market-maker is assumed to optimize his position by setting a bid-ask spread which maximizes the difference between expected revenues received from liquidity-motivated traders and expected losses to information-motivated traders. By characterizing the cost of supplying quotes, as writing a put and a call option to an information-motivated trader, it is shown that the bid-ask spread is a positive function of the price level and return variance, a negative function of measures of market activity, depth, and continuity, and negatively correlated with the degree of competition. Thus, the theory of information effects on the bid-ask spread proposed in this paper is consistent with the empirical literature.},
 author = {Thomas E. Copeland and Dan Galai},
 journal = {The Journal of Finance},
 number = {5},
 pages = {1457--1469},
 publisher = {[American Finance Association, Wiley]},
 title = {Information Effects on the Bid-Ask Spread},
 urldate = {2023-07-05},
 volume = {38},
 year = {1983}
}

@article{RoelAggregator,
author = {Roel Oomen},
title = {Execution in an aggregator},
journal = {Quantitative Finance},
volume = {17},
number = {3},
pages = {383-404},
year  = {2017},
publisher = {Routledge},
doi = {10.1080/14697688.2016.1201589},

URL = {https://doi.org/10.1080/14697688.2016.1201589},
eprint = { 
    https://doi.org/10.1080/14697688.2016.1201589
        }
}

@article{glosten1985bid,
  title={Bid, ask and transaction prices in a specialist market with heterogeneously informed traders},
  author={Glosten, Lawrence R and Milgrom, Paul R},
  journal={Journal of Financial Economics},
  volume={14},
  number={1},
  pages={71--100},
  year={1985},
  publisher={Elsevier}
}

@article{amihud1980dealership,
  title={Dealership market: Market-making with inventory},
  author={Amihud, Yakov and Mendelson, Haim},
  journal={Journal of Financial Economics},
  volume={8},
  number={1},
  pages={31--53},
  year={1980},
  publisher={Elsevier}
}

@Article{lambert2021-rvga,
author={Lambert, Marc
and Bonnabel, Silv{\`e}re
and Bach, Francis},
title={The recursive variational {G}aussian approximation ({R-VGA})},
journal={Statistics and Computing},
year={2021},
month={Dec},
day={20},
volume={32},
number={1},
pages={10},
issn={1573-1375},
doi={10.1007/s11222-021-10068-w},
url={https://doi.org/10.1007/s11222-021-10068-w}
}

@misc{ollivier17-ekf-natural-gradient,
  doi = {10.48550/ARXIV.1703.00209},
  url = {https://arxiv.org/abs/1703.00209},
  author = {Ollivier, Yann},
  keywords = {Machine Learning (stat.ML), Optimization and Control (math.OC), FOS: Computer and information sciences, FOS: Computer and information sciences, FOS: Mathematics, FOS: Mathematics},
  title = {Online Natural Gradient as a {K}alman Filter},
  publisher = {arXiv},
  year = {2017},  
  copyright = {arXiv.org perpetual, non-exclusive license}
}

@inproceedings{duran-martin22-subspace-ekf,
  title={Efficient Online Bayesian Inference for Neural Bandits},
  author={Duran-Martin, Gerardo and Kara, Aleyna and Murphy, Kevin},
  booktitle={International Conference on Artificial Intelligence and Statistics},
  pages={6002--6021},
  year={2022},
  organization={PMLR}
}

@book{pml1Book,
 author = "Kevin P. Murphy",
 title = "Probabilistic Machine Learning: An Introduction",
 publisher = "MIT Press",
 year = 2022,
 url = "probml.ai"
}

@book{pml2Book,
 author = "Kevin P. Murphy",
 title = "Probabilistic Machine Learning: Advanced Topics",
 publisher = "MIT Press",
 year = 2023,
 url = "http://probml.github.io/book2"
}

@misc{dof-dnns-2021,
  doi = {10.48550/ARXIV.2107.05802},
  url = {https://arxiv.org/abs/2107.05802},
  author = {Larsen, Brett W. and Fort, Stanislav and Becker, Nic and Ganguli, Surya},
  keywords = {Machine Learning (cs.LG), Machine Learning (stat.ML), FOS: Computer and information sciences, FOS: Computer and information sciences},
  title = {How many degrees of freedom do we need to train deep networks: a loss landscape perspective},
  publisher = {arXiv},
  year = {2021},
  copyright = {arXiv.org perpetual, non-exclusive license}
}

@inproceedings{kingma14-adam,
  author    = {Diederik P. Kingma and
               Jimmy Ba},
  editor    = {Yoshua Bengio and
               Yann LeCun},
  title     = {Adam: {A} Method for Stochastic Optimization},
  booktitle = {3rd International Conference on Learning Representations, {ICLR} 2015},
  year      = {2015}
 }

@book{hastie01-esl,
  added-at = {2008-05-16T16:17:42.000+0200},
  address = {New York, NY, USA},
  author = {Hastie, Trevor and Tibshirani, Robert and Friedman, Jerome},
  keywords = {ml statistics},
  publisher = {Springer New York Inc.},
  series = {Springer Series in Statistics},
  timestamp = {2008-05-16T16:17:43.000+0200},
  title = {The Elements of Statistical Learning},
  year = 2001
}

@article{liu89-lbfgs,
  doi = {10.1007/bf01589116},
  url = {https://doi.org/10.1007/bf01589116},
  year = {1989},
  month = aug,
  publisher = {Springer Science and Business Media {LLC}},
  volume = {45},
  number = {1-3},
  pages = {503--528},
  author = {Dong C. Liu and Jorge Nocedal},
  title = {On the limited memory {BFGS} method for large scale optimization},
  journal = {Mathematical Programming}
}

@misc{lin2019steins,
      title={Stein's Lemma for the Reparameterization Trick with Exponential Family Mixtures}, 
      author={Wu Lin and Mohammad Emtiyaz Khan and Mark Schmidt},
      year={2019},
      eprint={1910.13398},
      archivePrefix={arXiv},
      primaryClass={stat.ML}
}

@techreport{ait2022and,
  title={How and When are High-Frequency Stock Returns Predictable?},
  author={A{\"\i}t-Sahalia, Yacine and Fan, Jianqing and Xue, Lirong and Zhou, Yifeng},
  year={2022},
  institution={National Bureau of Economic Research}
}

@article{cartea2022brokers,
  author={Cartea, {\'A}lvaro and S{\'a}nchez-Betancourt, Leandro},
  title={Brokers and informed traders: dealing with toxic flow and extracting trading signals},
  journal={SIAM Journal on Financial Mathematics},
  volume={16},
  number={2},
  pages={243--270},
  year={2025},
  publisher={SIAM}
}

@article{grossman1980impossibility,
  title={On the impossibility of informationally efficient markets},
  author={Grossman, Sanford J and Stiglitz, Joseph E},
  journal={The American Economic Review},
  volume={70},
  number={3},
  pages={393--408},
  year={1980},
  publisher={JSTOR}
}

@article{kyle1985continuous,
  title={Continuous auctions and insider trading},
  author={Kyle, Albert S},
  journal={Econometrica: Journal of the Econometric Society},
  pages={1315--1335},
  year={1985},
  publisher={JSTOR}
}

@article{butz2019internalisation,
  title={Internalisation by electronic {FX} spot dealers},
  author={Butz, Maximilian and Oomen, Roel},
  journal={Quantitative Finance},
  volume={19},
  number={1},
  pages={35--56},
  year={2019},
  publisher={Taylor \& Francis}
}

@article{kyle1989informed,
  title={Informed speculation with imperfect competition},
  author={Kyle, Albert S},
  journal={The Review of Economic Studies},
  volume={56},
  number={3},
  pages={317--355},
  year={1989},
  publisher={Wiley-Blackwell}
}

@article{arroyo2023deep,
  title={Deep attentive survival analysis in limit order books: Estimating fill probabilities with convolutional-transformers},
  author={Arroyo, Alvaro and Cartea, Alvaro and Moreno-Pino, Fernando and Zohren, Stefan},
  journal={Quantitative Finance},
  volume={24},
  number={1},
  pages={35--57},
  year={2024},
  publisher={Taylor \& Francis}
}

@article{cartea2023bandits,
  title={Bandits for Algorithmic Trading with Signals},
  author={Cartea, {\'A}lvaro and Drissi, Fay{\c{c}}al and Osselin, Pierre},
  journal={Available at SSRN 4484004},
  year={2023}
}

\appendix

\section{Features}\label{app: features}
For a given trading day ${\mathfrak d}\in \mfD$, the processes
\begin{equation*}
    \left(S^{a,\mathfrak{d}}_t\right)_{t\in\mfT}\,,\quad
    \left(S^{b,\mathfrak{d}}_t\right)_{t\in\mfT}\,,\quad 
    \left(V^{a,\mathfrak{d}}_t\right)_{t\in\mfT}\,,\quad 
    \left(V^{b,\mathfrak{d}}_t\right)_{t\in\mfT}\,,
\end{equation*}
denote the best ask price, the best bid price, the volume  at the best ask price, 
and the volume  at the best bid price in LMAX Exchange for day $\mathfrak{d}$, respectively ---
we drop the superscript $\mathfrak{d}$ when we do not wish to draw attention to the day.
The feature associated with 
the log-transformed inventory of client $c\in\mcA$ is
\begin{equation*}
\text{sign}\left(\inventory^{c}_{t^-}\right) \times \log\left(1+|\inventory^{c}_{t^-}|\right)\,,
\end{equation*}
where $\inventory^c_{t^-}$ is the position in lots (one lot is \euro 10,000)  of client $c$ accumulated
over $[0,t)$. 
Here, $q^c_t$ is the size of the order sent at time $t$ by client $c$
and $N^{c,a}_t$, $N^{c,b}_t $ are the counting processes of buy and sell orders, respectively, sent by client $c$
and filled by the broker.
The cash of client $c$, denoted by $\cash^{c}_t$, is given by
\begin{equation*}
\cash^{c}_t = - \int_0^t S^a_{u^-}\,q^c_u\,\diff N^{a,c}_u +  \int_0^t  S^b_{u^-}\,q^c_u\,\diff N^{b,c}_u\,, 
\end{equation*}
and the feature associated with the cash process is 
\begin{equation*}
\text{sign}\left(\cash^{c}_{t^-}\right) \times \log\left(1+|\cash^{c}_{t^-}|\right)\,.
\end{equation*}
In LMAX Exchange, the bid-ask spread is
\begin{equation*}
    \spread_t = S^a_t - S^b_t \,,
\end{equation*}
the midprice is 
\begin{equation*}
    S_t = \frac{S^a_t + S^b_t}{2} \,,
\end{equation*}
the volume imbalance of the best available volumes is defined by
\begin{equation*}
    \imbalance_t = \frac{V_t^\text{b} - V_t^\text{a}}{V_t^\text{b} + V_t^\text{a}}\,,
\end{equation*}
and the associated feature for the volume $V$ is the transformed volume
\begin{equation*}
    \volatility = \log(1+|V|)\,.
\end{equation*}
The number of trades received by the broker from her clients is
\begin{equation*}
    N_t = \sum_{c\in\mcA} \left(N^{c,a}_t+ N^{c,b}_t\right)\,,
\end{equation*}
and the volatility of the midprice in the LOB of LMAX Exchange over the interval $[t-\delta,t)$ is the square root of the quadratic variation of the logarithm of the midprice over the interval. More precisely, 
\begin{equation*}
    \volatility^{\delta}_t = \sqrt{ \sum_{\Delta \log{ S_u} \neq 0 \,;\, u\in [t-\delta,t)} |{\Delta \log{ S_u}}|^2 }\,,
\end{equation*}
where 
\begin{equation*}
  \Delta \log{ S_u} = \log{ S_u} - \log{ S_{u^-}} \,,\text{ and } S_{u^-} = \lim_{v\nearrow u} { S_v}\,.
\end{equation*}
The return of the exchange rate of the currency pair over a period $\delta>0$ is given by
\begin{equation*}
   \return_t^\delta = \log\left({S_{t^-}}/{S_{t-\delta}}\right).
\end{equation*}

The timing of the events in the LOB is measured with three clocks:
time-clock,  transaction-clock, and volume-clock.
The time-clock runs as $t\in[0,T]$ with microsecond precision,
i.e., a millionth of a second.
For a given day $d$ with $N^d$ transactions and $V^d$ volume traded, the transaction clock runs as $\mathfrak{t}\in[0,N^d]$, and the volume-clock runs as $\mathfrak{v}\in[0,V^d]$.  
The number of transactions $\mathfrak{t}(t)$ is the number of transactions observed up until time $t$, similar for the volume-clock $\mathfrak{v}(t)$. Thus, for any order sent at time $t$, the time associated with the order in the transaction-clock is $\mathfrak{t}(t)$ and the time in the volume-clock is $\mathfrak{v}(t)$.

For each clock $\clock \in\{$transaction, time, volume$\}$,
we build features spanning an interval $[\window_\clock\,2^n, \window_\clock\,2^{n+1})$
of units in the respective clock with $\window_\clock>0$,
and use a given statistic to summarise the values in the interval;
for example, for spread, imbalance, and transformed volumes we use
the average value over the period.
In our experiments, we consider seven intervals that span the ranges $[0,\window_\clock)$ and $\{[\window_\clock\,2^n,\, \window_\clock\,2^{n+1})\}_{n=0}^{8}$. 
The median time elapsed between any two transactions for the six clients is 1.8 seconds and the median quantity traded with LMAX Broker is \euro2,000. 
Thus, $\window_{\text{transaction}}$ is one transaction,
$\window_{\text{time}}$ is one second, and
 $\window_{\text{volume}}$ is \euro 2,000.\footnote{The intervals for the transaction-clock are $[0,1)$ transaction, $[1,2)$ transactions, $[2,4)$ transactions, \dots, and $[2^{5}, 2^{6})$ transactions. 
Similarly, the intervals  for the time-clock are $[0,1)$ second, $[1,2)$ seconds, $[2,4)$ seconds, \dots, and $[2^{5}, 2^{6})$ seconds. 
Lastly, the intervals  in the volume-clock are
$[0,\text{\euro} 2000)$, $[\text{\euro} 2000, \text{\euro} 4000)$,
$[\text{\euro} 4000, \text{\euro} 8000)$, \dots,
and $[\text{\euro} 2000\times 2^{5}, \text{\euro} 2000\times 2^{6})$. }

\section{{\ourmodel} derivation and proofs}\label{app: ourmodel derivations}
In this section, we present the  derivation and proofs for {\ourmodel}.
Proposition \ref{app: prop: fixed point} derives the general fixed point equations.
Given that these are expensive to compute, we present additional results to obtain the computationally efficient form of the theorem.
We then prove Theorem \ref{theorem:subspace-last-rvga} in  \ref{prop:order-2-single-step}.

\begin{proposition}\label{app: prop: fixed point}
(Modified R-VGA for {\ourmodel}) Suppose $\log p(\yt[n] \,\vert\, \phiddensub, \plast; \xt[n])$ is differentiable with respect to
to $(\phiddensub, \plast)$ and the observations $\{\yt[n]\}_{n=1}^N$ are conditionally independent over
$(\phiddensub, \plast)$. Given Gaussian prior distributions $\phi_0$, $\varphi_0$ for $\plast$ and $\phiddensub$
respectively, the variational posterior distributions at time $n \in \{1, \ldots, N\}$  that solve  \eqref{eq:subspace-last-rvga}
satisfy the fixed-point equations
\begin{equation}
\begin{aligned}\label{eq:ourmodel-form1}
    \mlast_n &=
    \mlast_{n-1} - 
    \covlast_{n-1} \nabla_{\mlast}\,\E_{\vdlast[n], \vdhidden[n]}[\log p(y_n \,\vert\, \phiddensub, \plast; \xt[n])]\,,\\
    \mhidden_n &= \mhidden_{n-1} - \covhidden_{n-1} \nabla_{\mhidden}\,\E_{\vdhidden[n], \vdhidden[n]}[\log p(\yt[n] \,\vert\, \phiddensub, \phidden; \xt[n])]\,,\\
    \covlast_n^{-1} &= \covlast_{n-1}^{-1} + 2\,\nabla_\covlast \,\E_{\vdlast[n], \vdhidden[n]}[\log p(\yt[n] \,\vert\, \phiddensub, \plast; \xt[n])]\,,\\
    \covhidden_n^{-1} &= \covhidden_{n-1}^{-1} + 2\,\nabla_\covhidden\, \E_{\vdhidden[n], \vdhidden[n]}[\log p(\yt[n] \,\vert\, \phiddensub, \phidden; \xt[n])]\,.
\end{aligned}
\end{equation}
\end{proposition}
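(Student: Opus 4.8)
\noindent The plan is to treat \eqref{eq:subspace-last-rvga} as a smooth finite-dimensional minimisation over the four variational parameters $(\mlast,\covlast,\mhidden,\covhidden)$ and to obtain \eqref{eq:ourmodel-form1} as its first-order stationarity conditions. First I would expand the objective: writing $q_t(\plast,\phiddensub)=\normdist{\plast}{\mlast}{\covlast}\,\normdist{\phiddensub}{\mhidden}{\covhidden}$ and noting that the target $\phi_{t-1}(\plast)\,\varphi_{t-1}(\phiddensub)\,p(\yt\,\vert\,\phiddensub,\plast;\xt)$ differs from a bona fide density only by a normalising constant $Z_t$ that does not depend on the variational parameters (hence drops out of the $\argmin$), the divergence in \eqref{eq:subspace-last-rvga} equals
\begin{equation*}
L(\mlast,\covlast,\mhidden,\covhidden)=\text{KL}\left(\normdist{\plast}{\mlast}{\covlast}\,\|\,\phi_{t-1}\right)+\text{KL}\left(\normdist{\phiddensub}{\mhidden}{\covhidden}\,\|\,\varphi_{t-1}\right)-\E_{q_t}\left[\log p(\yt\,\vert\,\phiddensub,\plast;\xt)\right]+\log Z_t,
\end{equation*}
where the product form of $q_t$ has been used to split the cross-entropy term additively. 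Because $\phi_{t-1}$ and $\varphi_{t-1}$ are Gaussian by hypothesis, the first two terms are the standard Gaussian--Gaussian KL, which is smooth in $(\mlast,\covlast)$ and in $(\mhidden,\covhidden)$; the conditional independence of $\{y_t\}$ over $(\phiddensub,\plast)$ is what licenses using the single likelihood factor $p(\yt\,\vert\,\cdot)$ in the $t$-th update.

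Next I would impose $\nabla L=0$ separately in $\mlast$, $\mhidden$, $\covlast$ and $\covhidden$. Using the elementary identities $\nabla_{\mlast}\,\text{KL}\left(\normdist{\plast}{\mlast}{\covlast}\,\|\,\normdist{\plast}{\mlast_{t-1}}{\covlast_{t-1}}\right)=\covlast_{t-1}^{-1}(\mlast-\mlast_{t-1})$ and the corresponding covariance derivative, which is proportional to $\covlast_{t-1}^{-1}-\covlast^{-1}$ (and their $\phiddensub$-analogues for $\mhidden,\covhidden$), the four stationarity conditions rearrange directly into the displayed fixed-point equations; differentiation under the expectation is justified by the assumed differentiability of $\log p(\yt\,\vert\,\phiddensub,\plast;\xt)$ together with the Gaussian integrability of its gradient (dominated convergence), and interiority of the minimiser in the positive-definite cone makes the covariance conditions unconstrained. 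This is the R-VGA argument of \cite{lambert2021-rvga}, adapted to the factorised variational family of \eqref{eq:subspace-last-rvga}.

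The point I would stress — and the only genuine subtlety — is that the system \eqref{eq:ourmodel-form1} is \emph{implicit}: the expectations and gradients on the right-hand sides are evaluated under the as-yet-unknown posteriors $\vdlast$ and $\vdhidden$, so this is a coupled fixed point rather than a closed-form recursion. There is no analytic obstacle in the derivation itself; the work is purely the Gaussian-KL bookkeeping, keeping the covariance-versus-precision parameterisation and the factor $2$ from the covariance derivative straight. The substantive effort comes afterwards — linearising the network output $h(\cdot;\xt)$ around the previous mean, as announced in the statement of Theorem~\ref{theorem:subspace-last-rvga} — to collapse \eqref{eq:ourmodel-form1} into the explicit updates \eqref{eq:part-mlast-update}--\eqref{eq:part-covhidden-update}.
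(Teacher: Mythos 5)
Your proposal follows essentially the same route as the paper's proof: expand the KL objective in \eqref{eq:subspace-last-rvga} into the two Gaussian--Gaussian divergences for $\plast$ and $\phiddensub$ plus the expected log-likelihood term, then set the gradients with respect to $(\mlast,\covlast,\mhidden,\covhidden)$ to zero using the closed-form Gaussian KL; your treatment of the unnormalised target and of differentiation under the integral is, if anything, more careful than the paper's. One caveat, though: with your (correct) sign convention the objective contains $-\E_{q_t}[\log p(\yt\,\vert\,\phiddensub,\plast;\xt)]$, so your stationarity conditions come out as $\mlast=\mlast_{t-1}+\covlast_{t-1}\nabla_\mlast\E[\log p]$ and $\covlast^{-1}=\covlast_{t-1}^{-1}-2\,\nabla_\covlast\E[\log p]$, i.e.\ the \emph{negation} of the signs displayed in \eqref{eq:ourmodel-form1}; the paper reaches its displayed signs only because the minus in front of $\log p(y_t)$ is dropped when the logarithm is split in \eqref{eq:rvga-fsll-rewrite} (and a compensating slip occurs in \eqref{eq:dK-mlast}). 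Your version is the one consistent with Corollary \ref{cor:rvga-order-2}, with the standard R-VGA equations, and with the final updates in Theorem \ref{theorem:subspace-last-rvga}, so your conditions do not ``rearrange directly'' into \eqref{eq:ourmodel-form1} as written --- they rearrange into its sign-corrected form, and you should say so explicitly rather than assert agreement with the displayed equations.
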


\begin{proof}
First,  rewrite \eqref{eq:subspace-last-rvga}. Let $p(y_n) \equiv p(\yt[n] \,\vert\, {\bf z}, \plast; \xt[n])$
to simplify notation. The loss function is
\begin{align}
    {\cal K}_n &= \text{KL}\left(
    \normdistlast\,\normdisthidden \, ||\, \phi_{n-1}(\plast)\,\varphi_{n-1}({\bf z})\, p(\yt[n])
    \right)\nonumber\\
    &= \iint \normdisthidden\,\normdistlast \log\left( \frac{\normdisthidden\,\normdistlast}
        { \vdhidden[n-1]\, \vdlast[n-1]\, p(\yt[n])} \right)
        \diff\phiddensub\,\diff\plast\nonumber\\
    &= \iint \normdisthidden\,\normdistlast \left[
        \log\left(\frac{\normdisthidden}{\vdhidden[n-1]}\right) + \log\left(\frac{\normdisthidden}{\vdlast[n-1]} \right) - \log p(y_n)
    \right] \diff\phiddensub\, \diff\plast\nonumber\\
    &= \int \normdisthidden\, \log\left(\frac{\normdisthidden}{\vdhidden[n-1]}\right) \diff\phiddensub + \int \normdistlast \log\left(\frac{\normdistlast}{\vdlast[n-1]}\right) \diff\plast \nonumber \nonumber\\ 
    &\quad  + \iint \normdisthidden\,\normdistlast\, \log p(y_n)\, \diff\plast \,\diff\phiddensub \nonumber\\
    &= \mathbb{E}_{\normdisthidden}\left[
        \log\left(\frac{\normdisthidden}{\vdhidden[n-1]}\right)
    \right] + \mathbb{E}_{\normdistlast}\left[
        \log\left(\frac{\normdistlast}{\vdlast[n-1]}\right)
    \right]
\nonumber \\
    &\quad    + \mathbb{E}_{\normdisthidden\, \normdistlast}[\log p(y_n)] \nonumber\\
    &\begin{aligned}\label{eq:rvga-fsll-rewrite}
    &=\text{KL}\left(\normdistlast \, ||\, \vdlast[n-1]\right)
    + \text{KL}\left(\normdisthidden \, || \, \vdhidden[n-1]\right)\\
    &\hspace{7cm} + \mathbb{E}_{\normdisthidden\, \normdistlast}[\log p(y_n)]\,.
    \end{aligned}
\end{align}
The first and second terms in \eqref{eq:rvga-fsll-rewrite} correspond to a Kullback--Leibler
divergence between two multivariate Gaussians.
The last term corresponds to the posterior-predictive marginal log-likelihood for the $t$-th observation.
To minimise \eqref{eq:rvga-fsll-rewrite}, we recall that the Kullback--Leibler divergence between two multivariate Gaussian
is given by
\begin{align*}
    &\text{KL}(\normdist{\bf x}{{\bf m}_1}{{\bf S}_1}\, || \, \normdist{\bf x}{{\bf m}_2}{{\bf S}_2}) \\
    &= \frac{1}{2}\left[
        \text{Tr}\left( {\bf S}_2^{-1}{\bf S}_1 \right)
        + ({\bf m}_2 - {\bf m}_1)^\intercal {\bf S}_2^{-1} ({\bf m}_2 - {\bf m}_1)
        - M
        + \log\left( |{\bf S}_2|/|{\bf S}_1|\right)
    \right]\,,
\end{align*}
see Section 6.2.3 in \cite{pml1Book}.

To simplify notation, let $\mathbb{E}_{\normdisthidden \normdistlast}[\log p(y_n)] =: {\cal E}_n$.
The derivative of ${\cal K}_n$ with respect to $\mlast$ is 
\begin{align}
    \nabla_\mlast {\cal K}_n
    &= \nabla_\mlast \left(
    \text{KL}\left(\normdistlast \, ||\, \vdlast[n-1]\right)
    + {\cal E}_n
    \right) \nonumber\\ 
    &= \nabla_\mlast\left(
        \frac{1}{2}\mlast^\intercal\covlast_{n-1}^{-1}\mlast - \mlast^\intercal\covlast_{n-1}\mlast_{n-1}
         + \nabla_\mlast{\cal E}_n
    \right)\nonumber \\
    &= \covlast_{n-1}^{-1}\mlast - \covlast_{n-1}^{-1}\mlast_{n-1} +\nabla_\mlast {\cal E}_n \nonumber \\
    &= \covlast_{n-1}^{-1}\left( \mlast - \mlast_{n-1} - \covlast_{n-1}\nabla_\mlast{\cal E}_n \right). \label{eq:dK-mlast}
\end{align}
Set \eqref{eq:dK-mlast} to zero and solve for
\begin{equation*}
    \mlast = \mlast_{n-1} - \covlast_{n-1}\nabla_\mlast{\cal E}_n\,.
\end{equation*}
Next, we estimate the condition
for $\covlast$. Use \eqref{eq:rvga-fsll-rewrite} to obtain
\begin{align}
    \nabla_\covlast {\cal K}_n &=
    \nabla_\covlast\left(
        -\frac{1}{2}\,\log|\covlast| + \frac{1}{2}\,\text{Tr}\left(\covlast\,\covlast_{n-1}^{-1}\right) + {\cal E}_n
    \right)\nonumber \\
    &= -\frac{1}{2}\,\covlast^{-1} + \frac{1}{2}\,\covlast_{n-1}^{-1} + \nabla_{\covlast}{\cal E}_n. \label{eq:dK-covlast}
\end{align}
The fixed-point solution for \eqref{eq:dK-covlast} satisfies
\begin{equation*}
    \covlast^{-1} = \covlast_{n-1}^{-1} + 2\, \nabla_{\covlast}{\cal E}_n\,.
\end{equation*}
We derive the fixed-point conditions for $\mhidden$ and $\covhidden$ similarly.
\end{proof}

\begin{corollary}\label{cor:rvga-order-2}
Suppose $\log p(y \,\vert\, \phiddensub, \plast)$ is differentiable with respect
to $(\phiddensub, \plast)$ and the observations $\{y_n\}_{n=1}^T$ are conditionally independent over
$(\phiddensub, \plast)$. Given Gaussian prior distributions $\phi_0$, $\varphi_0$ for $\plast$ and $\phiddensub$
respectively, the modified R-VGA equations for {\ourmodel}
in terms of gradients and Hessians with respect to  $\phiddensub$ and $\plast$ are
\begin{align*}
    \mlast_n &=
    \mlast_{n-1} + \covlast_{n-1} \E_{\vdlast[n]\,\vdhidden[n]}[\nabla_\plast \log p(y_n \,\vert\, \phiddensub, \plast; \xt[n])]\,,\\
    \mhidden_n &=
    \mhidden_{n-1} + \covhidden_{n-1} \E_{\vdlast[n]\,\vdhidden[n]}[\nabla_\phiddensub \log p(y_n \,\vert\, \phiddensub, \plast; \xt[n])]\,,\\
    \covlast_n^{-1} &=
    \covlast_{n-1}^{-1} + \E_{\vdlast[n]\,\vdhidden[n]}[\nabla_\plast^2 \log p(y_n \,\vert\, \phiddensub, \plast; \xt[n])]\,,\\
    \covhidden_n^{-1} &=
    \covhidden_{n-1}^{-1} + \E_{\vdlast[n]\,\vdhidden[n]}[\nabla_\phiddensub^2 \log p(y_n \,\vert\, \phiddensub, \plast; \xt[n])]\,.
\end{align*}
\end{corollary}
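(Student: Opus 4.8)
The plan is to pass from the fixed-point equations \eqref{eq:ourmodel-form1} of Proposition \ref{app: prop: fixed point} to the stated order-2 form by rewriting every derivative of the Gaussian expectation $\E_{\vdlast\,\vdhidden}[\log p(\yt\,\vert\,\phiddensub,\plast;\xt)]$ with respect to a variational parameter as an expectation of a derivative of $\log p$ with respect to the underlying parameter. The two ingredients are Bonnet's theorem and Price's theorem for Gaussian expectations: for a sufficiently regular integrand $f$ and ${\bf x}\sim\normdist{\bf x}{\bf m}{\bf S}$,
\begin{equation*}
\nabla_{\bf m}\,\E\bigl[f({\bf x})\bigr]=\E\bigl[\nabla_{\bf x}f({\bf x})\bigr]
\qquad\text{and}\qquad
\nabla_{\bf S}\,\E\bigl[f({\bf x})\bigr]=\tfrac12\,\E\bigl[\nabla_{\bf x}^2 f({\bf x})\bigr]\,,
\end{equation*}
with $\nabla_{\bf S}$ taken in the symmetric-matrix convention. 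Both follow from differentiating under the integral sign together with $\nabla_{\bf m}\normdist{\bf x}{\bf m}{\bf S}={\bf S}^{-1}({\bf x}-{\bf m})\,\normdist{\bf x}{\bf m}{\bf S}$ and the heat-type identity $\nabla_{\bf S}\normdist{\bf x}{\bf m}{\bf S}=\tfrac12\,\nabla_{\bf x}^2\normdist{\bf x}{\bf m}{\bf S}$ (one integration by parts for Bonnet, two for Price).

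First I would record the regularity that justifies differentiating under the integral sign: the hypothesis that $\log p(\yt\,\vert\,\phiddensub,\plast;\xt)$ is differentiable in $(\phiddensub,\plast)$, together with the fact that for the sigmoid--Bernoulli model $\nabla\log p$ and $\nabla^2\log p$ are controlled by $\sigma'\le\tfrac14$ times polynomial expressions in $h(\phiddensub;\xt)$ and $\nabla_\phiddensub h(\phiddensub;\xt)$ --- which have finite moments of all orders under a Gaussian --- so the Leibniz rule applies and $\nabla_{\mlast},\nabla_{\mhidden},\nabla_{\covlast},\nabla_{\covhidden}$ may be moved through the integrals in \eqref{eq:ourmodel-form1}.

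Next I would apply the two identities one line at a time. Since the variational family factorises as $\normdistlast\,\normdisthidden$, the joint expectation splits into nested integrals, with $(\mlast,\covlast)$ entering only the $\plast$-marginal and $(\mhidden,\covhidden)$ only the $\phiddensub$-marginal; for the $\plast$-updates one treats $f(\plast):=\E_{\vdhidden}[\log p(\yt\,\vert\,\phiddensub,\plast;\xt)]$ as the integrand against $\normdistlast$, and symmetrically for the $\phiddensub$-updates. Bonnet's theorem rewrites $\nabla_{\mlast}\E_{\vdlast\,\vdhidden}[\log p]$ as $\E_{\vdlast\,\vdhidden}[\nabla_\plast\log p]$, which turns the $\mlast$-line of \eqref{eq:ourmodel-form1} into $\mlast_t=\mlast_{t-1}+\covlast_{t-1}\,\E_{\vdlast\,\vdhidden}[\nabla_\plast\log p(\yt\,\vert\,\phiddensub,\plast;\xt)]$; Price's theorem rewrites $\nabla_{\covlast}\E_{\vdlast\,\vdhidden}[\log p]$ as $\tfrac12\,\E_{\vdlast\,\vdhidden}[\nabla_\plast^2\log p]$, so the explicit factor $2$ in \eqref{eq:ourmodel-form1} cancels the $\tfrac12$ and the $\covlast$-line becomes $\covlast_t^{-1}=\covlast_{t-1}^{-1}-\E_{\vdlast\,\vdhidden}[\nabla_\plast^2\log p(\yt\,\vert\,\phiddensub,\plast;\xt)]$. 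The $\mhidden$ and $\covhidden$ equations follow by the identical argument with $\phiddensub$ in place of $\plast$.

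The calculation itself is routine; the part requiring care --- and the main obstacle in the bookkeeping sense --- is the sign and matrix-derivative convention. Because the data term sits in the denominator of the Kullback--Leibler objective \eqref{eq:rvga-fsll-rewrite}, it enters the variational free energy with a minus sign, and it is this minus sign that produces the ``$+$'' in the two mean updates and the ``$-$'' in the two precision updates; one must also use in Price's theorem the same symmetric-matrix convention for $\nabla_{\bf S}$ that was used to differentiate the Gaussian Kullback--Leibler terms in the proof of Proposition \ref{app: prop: fixed point}, so that the factor $2$ there cancels the $\tfrac12$ exactly. With those conventions fixed, no further computation is needed.
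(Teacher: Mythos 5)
Your proposal is correct and follows essentially the same route as the paper, whose proof is a one-line appeal to Bonnet's theorem and Price's theorem (Lemma~1 and Theorem~1 of \cite{lin2019steins}) applied to the fixed-point equations of Proposition~\ref{app: prop: fixed point}. You simply make explicit what the paper leaves implicit --- the justification for differentiating under the integral sign, the factorisation of the variational family, and the sign/factor-of-two bookkeeping (where your reading, with the data term entering the free energy as $-\E_q[\log p(y_t)]$, is the one consistent with the corollary's stated signs) --- so no further work is needed.
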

\begin{proof}
    The proof follows directly by rearranging the order of integration, Bonnet’s Theorem, and Prices’s Theorem.
    see Theorem 3 and Theorem 4 in \cite{lin2019steins}.
\end{proof}

Corollary \ref{cor:rvga-order-2} provides tractable fixed-point equations. Note that in Proposition \ref{app: prop: fixed point} the gradients are taken with respect to model parameters outside the expectation, whereas in Corollary \ref{cor:rvga-order-2} the gradients and Hessians are taken inside the expectation.

\begin{proposition}\label{prop:part-derivative-canonical-link-bernoulli}
Given a logistic regression model written in the canonical form
\begin{align*}
    \log p(y_n) &=
    y_n \log\left(\frac{\sigma({\bm\theta}^\intercal\,\xt[n])}{1 -\sigma(\bm{\theta}^\intercal\,\xt[n]) }\right) -
    \left(1 + \exp(\sigma(\bm{\theta}^\intercal\,\xt[n]))\right)\,,
\end{align*}
we have that the gradient of $\log\left(\frac{\sigma({\bm\theta}^\intercal\,\xt[n])}{1 -\sigma(\bm{\theta}^\intercal\,\xt[n]) }\right)$
with respect to
$\bm\theta$ is given by
\begin{equation*}
    \nabla_{\bm{\theta}} \log\left(\frac{\sigma({\bm\theta}^\intercal\,\xt[n])}{1 -\sigma(\bm{\theta}^\intercal\,\xt[n]) }\right) = \xt[n]\,.
\end{equation*}
\end{proposition}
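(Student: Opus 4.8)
The plan is to reduce the identity to the elementary observation that the logit (log-odds) of the sigmoid is the identity map, after which the gradient is immediate because $\bm\theta^\intercal\xt$ is linear in $\bm\theta$.

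First I would establish the scalar identity $\sigma(u)/(1-\sigma(u)) = e^{u}$ for all $u\in\reals$: writing $\sigma(u) = (1+e^{-u})^{-1}$ gives $1-\sigma(u) = e^{-u}(1+e^{-u})^{-1}$, and dividing the two yields $e^{u}$; hence $\log\!\big(\sigma(u)/(1-\sigma(u))\big) = u$. Substituting $u = \bm\theta^\intercal\xt$ shows that the function being differentiated is simply the linear form $\bm\theta\mapsto\bm\theta^\intercal\xt$, namely
\begin{equation*}
\log\left(\frac{\sigma(\bm\theta^\intercal\xt)}{1-\sigma(\bm\theta^\intercal\xt)}\right) = \bm\theta^\intercal\xt\,.
\end{equation*}

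Second, I would differentiate: the map $\bm\theta\mapsto\bm\theta^\intercal\xt$ is linear, so its gradient is the constant vector $\xt$, which is exactly the claimed formula. As an independent check one can instead apply the chain rule directly to $f(u):=\log\sigma(u)-\log(1-\sigma(u))$, using $\sigma'(u)=\sigma(u)(1-\sigma(u))$ to get $f'(u) = (1-\sigma(u)) + \sigma(u) = 1$, whence $\nabla_{\bm\theta}\,f(\bm\theta^\intercal\xt) = f'(\bm\theta^\intercal\xt)\,\xt = \xt$; the two routes agree.

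There is no substantive obstacle here — the statement is just the standard exponential-family fact that, under the canonical link, the natural parameter of the Bernoulli likelihood equals the linear predictor $\bm\theta^\intercal\xt$. The only point worth care is to simplify the logit before differentiating, rather than pushing $\nabla_{\bm\theta}$ through $\sigma$; with that simplification the computation is a one-liner. This proposition is what later lets us read off the gradient (and, via a second differentiation, the Hessian) terms needed when Corollary \ref{cor:rvga-order-2} is specialised to the sigmoid output head of \eqref{eq:model-yt-sub}.
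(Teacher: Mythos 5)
Your proof is correct. Your primary route differs from the paper's: you first establish the scalar identity $\sigma(u)/(1-\sigma(u))=e^{u}$, so the logit collapses to the linear form $\bm\theta^\intercal\xt$ and the gradient is immediate, whereas the paper differentiates $\log\sigma(\bm\theta^\intercal\xt)-\log\bigl(1-\sigma(\bm\theta^\intercal\xt)\bigr)$ through the sigmoid using $\sigma'(x)=\sigma(x)\,(1-\sigma(x))$ and observes that the two resulting terms sum to $\bigl(1-\sigma(\bm\theta^\intercal\xt)+\sigma(\bm\theta^\intercal\xt)\bigr)\,\xt=\xt$. Your ``independent check'' at the end is in fact exactly the paper's argument, so you have both derivations in hand. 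What your main route buys is that it makes the exponential-family structure transparent --- the natural parameter of the Bernoulli model under the canonical link \emph{is} the linear predictor --- which is precisely the fact the proposition is invoked for later in the moment-matching step; the paper's route is a slightly longer but equally elementary chain-rule computation. Either is acceptable; no gap.
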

\begin{proof}
    Use the identity $\frac{\diff}{\diff x}\sigma(x) = \sigma(x) \,(1 - \sigma(x)) =: \sigma'(x)$ and write
    \begin{align*}
        \nabla_{\bm\theta} \log\left(\frac{\sigma({\bm\theta}^\intercal\,\xt[n])}{1 -\sigma(\bm{\theta}^\intercal\,\xt[n]) }\right)
        &= \nabla_{\bm\theta} \log\left(\sigma({\bm\theta}^\intercal\,\xt[n]\right) - \nabla_{\bm\theta}\log\left(1 -\sigma(\bm{\theta}^\intercal\,\xt[n])\right)\\
        &= \left(\sigma({\bm\theta}^\intercal\,\xt[n])\right)^{-1}\sigma'({\bm\theta}^\intercal\,\xt[n])\,\xt[n]
        + \left(1 -\sigma(\bm{\theta}^\intercal\,\xt[n])\right)^{-1}\sigma'({\bm\theta}^\intercal\,\xt[n])\xt[n]\\
        &= (1 - \sigma(\bm{\theta}^\intercal\,\xt[n]) + \sigma(\bm{\theta}^\intercal\,\xt[n]))\,\xt[n]\\
        &= \xt[n]\,.
    \end{align*}
\end{proof}

\subsection{Proof of Theorem \ref{theorem:subspace-last-rvga}}\label{prop:order-2-single-step}
\begin{proof}
    First rewrite the log-likelihood of the target variable $y_n$ as a member of the exponential-family.
    Let $f_n(\phiddensub, \plast) = \plast^\intercal\, h(\phiddensub; \xt[n])$. Then,
    \begin{align*}
        \log p(y_n)
        &= \log \text{Bern}\Big(\yt[n] \,\vert\, \sigma(\plast^\intercal\, h(\phiddensub; \xt[n]))\Big)\\
        &= \yt[n] \log \sigma(f_n(\phiddensub, \plast)) + (1 - \yt[n]) \log(1 - \sigma(f_n(\phiddensub, \plast)))\\
        &= \yt[n] \log\left( \frac{\sigma(f_n(\phiddensub, \plast))}{1 - \sigma(f_n(\phiddensub, \plast))} \right)
             + \log(1 - \sigma(f_n(\phiddensub, \plast)))\\
        &= \yt[n] \log\left( \frac{\sigma(f_n(\phiddensub, \plast))}{1 - \sigma(f_n(\phiddensub, \plast))} \right)
             - \log\left(1 + \exp\left(\log\left(\frac{\sigma(f_n(\phiddensub, \plast))}{1 - \sigma(f_n(\phiddensub, \plast))}\right)\right)\right)\\
        &= \yt[n]\,\eta_n - \log(1 + \exp(\eta_n))\\
        &= \yt[n]\,\eta_n - A(\eta_n)\,,
    \end{align*}
    where $\eta_n = \log\left( \frac{\sigma(f_n(\phiddensub, \plast))}{1 - \sigma(f_n(\phiddensub, \plast))} \right)$ is
    the natural parameter and $A(\eta_n)$ the log-partition function.
    We perform a moment-match estimation. 
    To do this, we follow the property of exponential-family distributions,
    and use the results in Proposition \ref{prop:part-derivative-canonical-link-bernoulli}.
    The first and second-order derivatives of the log-partition $A(\eta_n)$ are
    \begin{align}
        \frac{\partial}{\partial \eta_n} A(\eta_n) &= \E[y \,\vert\, \eta_n]
        = \sigma(f_n(\phiddensub, \plast))\,, \label{eq:part-model-expectation}\\
        \frac{\partial^2}{\partial \eta_n^2} A(\eta_n) &= \text{Cov}[y \,\vert\, \eta_n]
        = \sigma'(f_n(\phiddensub, \plast))\,. \label{eq:part-model-covariance}
    \end{align}
    Then, the first order approximation 
    $\hat\sigma(f_n(\phiddensub, \plast))$ is
    \begin{align*}
        \hat\sigma(f_n(\phiddensub, \plast))
        &= \sigma(\bar{f}_{n-1}) + \sigma'(\bar{f}_{n-1}) (F_{n,\phiddensub}^\intercal\,(\phiddensub - \mhidden_{n-1}) +  F_{n,\plast}^\intercal\,(\plast - \mlast_{n-1}))\,,
    \end{align*}
    where $\bar{f}_{n-1} = f_n(\mhidden_{n-1}, \mlast_{n-1})$,
    $F_{n,\phiddensub} = \nabla_{\phiddensub}f_n(\phiddensub, \mlast_{n-1})\big\vert_{\phiddensub=\mhidden_{n-1}}$, and 
    $F_{n,\plast} = \nabla_{\plast}f_n(\mhidden_{n-1}, \plast)\big\vert_{\plast=\mlast_{n-1}} =h(\mhidden_{n-1}; \xt[n])$.
    Next, we derive the update equations for $\mlast$ and $\covlast$. The  moment-matched log-likelihood is given by 
    \begin{equation}
        \log p(y_n) = \log\normdist{y_n}
        {\hat\sigma(f_n(\phiddensub, \plast))}{\sigma(f_n(\phiddensub, \plast)) ( 1 - \sigma(f_n(\phiddensub, \plast)))}
    \end{equation}
    and the gradient with respect to $\plast$ is
    \begin{equation*}
        \nabla_\plast \log p(y_n) = \yt[n]\, F_{n,\plast} - \hat\sigma(\bar{f}_{n-1})\, F_{n,\plast}
        = (\yt[n] - \hat\sigma(\bar{f}_{n-1}))\, F_{n,\plast}\,.
    \end{equation*}
    Now, the Hessian of the log-model with respect to $\plast$ is
    \begin{align*}
        \nabla_\plast^2 \log p(y_n) &= \nabla_\plast (\yt[n] - \hat\sigma(\bar{f}_{n-1}))\, F_{n,\plast}\\
        &= -\sigma'(\bar{f}_{n-1})\, F_{n,\plast}^\intercal\, \,F_{n,\plast}\,.
    \end{align*}
    The update for $\covlast_n$, following the order-2 form of the modified equations and replacing
    the expectation under $\vdhidden[n]$ for $\vdhidden[n-1]$, is
    \begin{align*}
        \covlast_n^{-1}
        &= \covlast_{n-1}^{-1} - \E_{\vdlast[n]\, \vdhidden[n-1]}\left[\nabla^2_\plast \log p(y_n)\right]\\
        &= \covlast_{n-1}^{-1} - \E_{\vdlast[n]\, \vdhidden[n-1]}\left[ -\sigma'(\bar{f}_{n-1})\, F_{n,\plast}^\intercal\, F_{n,\plast} \right]\\
        &= \covlast_{n-1}^{-1} + \sigma'(\bar{f}_{n-1})\, F_{n,\plast}^\intercal\, F_{n,\plast}\\
        &= \covlast_{n-1}^{-1} + \sigma'\big(\mlast_{n-1}^\intercal\, h(\mhidden_{n-1}; \xt[n])\big)\, h(\mhidden_{n-1}; \xt[n])^\intercal\, h(\mhidden_{n-1}; \xt[n])\,.
    \end{align*}
    Next, the update step for $\mlast_n$ becomes
    \begin{align*}
        \mlast_n &= \mlast_{n-1} - \covlast_{n-1}\, \E_{\vdlast[n] \vdhidden[n-1]}[\nabla_\plast \log p(y_n \,\vert\, \phiddensub, \plast; \xt[n])]\\
        &= \mlast_{n-1} + \covlast_{n-1}\, \E_{\vdlast[n] \vdhidden[n-1]}[(\yt[n] - \hat\sigma(\bar{f}_{n-1}))\, F_{n,\phiddensub}]\\
        &= \mlast_{n-1} + \covlast_{n-1}\, \E_{\vdlast[n] \vdhidden[n-1]}\Big[(\yt[n] - 
        \sigma(\bar{f}_{n-1})  \\
        &\qquad + \sigma'(\bar{f}_{n-1})
        (F_{n,\phiddensub}(\phiddensub - \mhidden_{n-1}) + F_{n,\plast}^\intercal\,(\plast - \mlast_{n-1}))\Big]F_{n,\phiddensub} \nonumber\\
        &= \mlast_{n-1} + \covlast_{n-1}\, F_{n, \plast}\Big(\yt[n] - 
        \sigma(\bar{f}_{n-1})
        - \sigma'(\bar{f}_{n-1})\, F_{n,\plast}^\intercal\,(\mlast_n - \mlast_{n-1})\Big)\\
        &= \mlast_{n-1} + \covlast_{n-1}\, F_{n, \plast}\Big(\yt[n] - 
        \sigma(\bar{f}_{n-1})
        + \sigma'(\bar{f}_{n-1})\, F_{n,\plast}^\intercal\,\mlast_{n-1})\Big)
        - \sigma'(\bar{f}_{n-1})\covlast_{n-1}\, F_{n, \plast}\, F_{n, \plast}^\intercal\,\mlast_n\,.
    \end{align*}
We rewrite the last equality as
\begin{align*}
    &\mlast_n + \sigma'(\bar{f}_{n-1})\covlast_{n-1}\, F_{n, \plast}\, F_{n, \plast}^\intercal\,\mlast_n
    = \mlast_{n-1} + \covlast_{n-1}\, F_{n, \plast}\Big(\yt[n] - 
    \sigma(\bar{f}_{n-1})
    + \sigma'(\bar{f}_{n-1})\, F_{n,\plast}^\intercal\,\mlast_{n-1})\Big)\,,
\end{align*}
which implies that
\begin{align*}
   \left({\bf I} + \sigma'(\bar{f}_{n-1})\,\covlast_{n-1}\,F_{n, \plast}\,F_{n, \plast}^\intercal\,\right)\mlast_n
    = \mlast_{n-1} + \covlast_{n-1}\,F_{n, \plast}\,\Big(\yt[n] - 
    \sigma(\bar{f}_{n-1})
    + \sigma'(\bar{f}_{n-1})\,F_{n,\plast}^\intercal\,\mlast_{n-1})\Big)\,.
\end{align*}
Similarly, we have that
\begin{align*}
   \mlast_n
    = \left({\bf I} + \sigma'(\bar{f}_{n-1})\covlast_{n-1}\, F_{n, \plast}F_{n, \plast}^\intercal\,\right)^{-1}
    \left(\mlast_{n-1} + \covlast_{n-1}F_{n, \plast}\Big(\yt[n] - 
    \sigma(\bar{f}_{n-1})
    + \sigma'(\bar{f}_{n-1})\, F_{n,\plast}^\intercal\,\mlast_{n-1})\Big)\right)\,,
\end{align*}
where 
\begin{align*}
    \left({\bf I} + \sigma'(\bar{f}_{n-1})\,\covlast_{n-1}\,F_{n, \plast}F_{n, \plast}^\intercal\,\right)^{-1}
    =&\left(\covlast_{n-1}\left[\covlast_{n-1}^{-1} + \sigma'(\bar{f}_{n-1})F_{n, \plast}F_{n, \plast}^\intercal\,\right]\right)^{-1}\\
    =& \left[\covlast_{n-1}^{-1} + \sigma'(\bar{f}_{n-1})F_{n, \plast}F_{n, \plast}^\intercal\,\right]^{-1}\covlast_{n-1}^{-1}\\
    =&\, \covlast_{n}\,\covlast_{n-1}^{-1}\,.
\end{align*}
    Then, it follows that
    \begin{align*}
        \mlast_n
        &= \covlast_{n}\,\covlast_{n-1}^{-1}
        \left(\mlast_{n-1} + \covlast_{n-1}\,F_{n, \plast}\Big(\yt[n] - 
        \sigma(\bar{f}_{n-1})
        + \sigma'(\bar{f}_{n-1})\,F_{n,\plast}^\intercal\,\mlast_{n-1})\Big)\right)\\
        &= 
        \covlast_{n}\,\covlast_{n-1}^{-1}\,\mlast_{n-1} +
        \covlast_{n}\,\covlast_{n-1}^{-1}\,\covlast_{n-1}\,F_{n, \plast}\Big(\yt[n] - 
        \sigma(\bar{f}_{n-1})
        + \sigma'(\bar{f}_{n-1})\,F_{n,\plast}^\intercal\,\mlast_{n-1})\Big)\\
        &=
        \covlast_{n}\,\left(\covlast_{n-1}^{-1} + \sigma'(\bar{f}_{n-1})\,F_{n,\plast}^\intercal\,\mlast_{n-1}) \right)\, \mlast_{n-1}
        +\covlast_{n-1}\,F_{n, \plast}\Big(\yt[n] - \sigma(\bar{f}_{n-1})\Big)\\
        &= 
        \covlast_{n}\,\covlast_n^{-1}\,\mlast_{n-1}
        +\covlast_{n-1}\,F_{n, \plast}\,\Big(\yt[n] - \sigma(\bar{f}_{n-1})\Big)\\
        &= \mlast_{n-1} + \covlast_{n-1}\,F_{n, \plast}\Big(\yt[n] - \sigma(\bar{f}_{n-1})\Big)\\
        &= \mlast_{n-1} +\covlast_{n-1}\, h(\mhidden_{n-1}; \xt[n])  \,\Big(\yt[n] - \sigma(\mlast_{n-1}^\intercal\, h(\mhidden_{n-1}; \xt[n]))\Big)\,.
    \end{align*}
    The updates for $\mhidden_n$ and $\covhidden_n$ are obtained similarly.
\end{proof}

\section{Robustness analysis}\label{appendix:robustness-checks}
\subsection{Inventory aversion}\label{app: amb aversion}
Here, we study how the value of inventory aversion parameter $\Phi$ affects the the AUC and the internalisation-externalisation strategy.
Intuitively, as the value of inventory aversion parameter increases, the broker is willing to increase the {\cutoffprobability} to internalise a trade if it would reduce the absolute value of her inventory;
similarly, she is willing to decrease the {\cutoffprobability} to internalise a trade if internalising the trade increases the absolute value of her inventory.

A higher value of the inventory aversion parameter does not necessarily mean that the broker externalises more trades,
it means that the broker is keener to externalise trades that decrease the absolute value of her inventory and less keen to internalise trades
that increase the absolute value of her inventory.
Figure \ref{fig: app-int-vol-amb} (left) shows the internalised volume as a function of the inventory aversion parameter $\Phi$.
Figure \ref{fig: app-int-vol-amb} (right) shows the PnL and avoided profits
of the internalisation-externalisation strategy in \eqref{eq: optimal strategy delta^pm}
as a function of the inventory aversion parameter $\Phi$ for a {\toxichorizon} of 60s using {\ourmodel}.

\begin{figure}[H]
    \centering
    \includegraphics[width=0.8\linewidth]{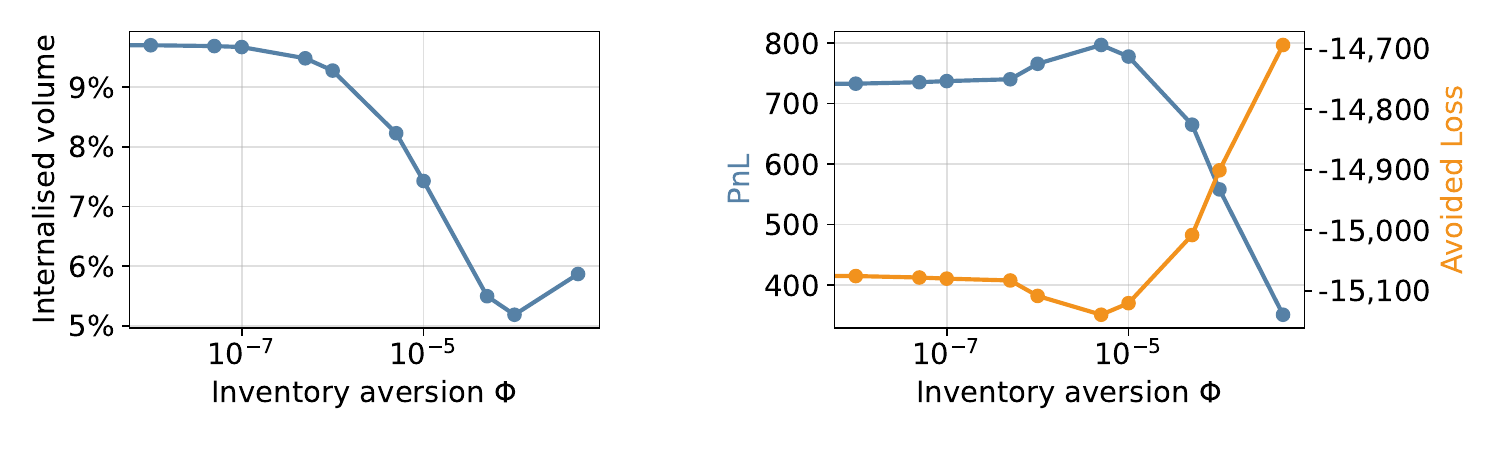}
    \caption{
    Left panel: percentage of internalised volume as a function of the inventory aversion parameter $\Phi$. Right panel: PnL and avoided profit of the internalisation-externalisation strategy as  a function of the inventory aversion parameter $\Phi$.
    }
    \label{fig: app-int-vol-amb}
\end{figure}
As the value of the inventory aversion parameter $\Phi$ increases,
the broker tends to internalise fewer trades. 
The performance of the internalisation–externalisation strategy is stable for small values of $\Phi$, 
but when $\Phi$ becomes large the strategy performs poorly. 
{
To see this, observe that when $\Phi>0$,  a positive (negative) inventory $Q>0$ ($Q<0$) makes the effective cutoff probability  higher for trades that increase (decrease) the inventory further; this makes the broker less likely to internalise trades in these respective directions. 
In our experiments, this asymmetry in the effective cutoff probability undermines the predictions made by the models
which lowers the PnL.
The skewing of effective cutoff probability also implies that fewer trades are internalised (see left panel). 
Fewer internalised trades also reduces realised PnL (on average)
because  the PnL of the broker is positive for the baseline experiments.
At the same time, avoided profit increases because more of these profitable (for the broker) trades are externalised. 
In the high-inventory aversion regime the strategy is dominated by the urgency to unwind inventory, 
and the predictive power of the models matters less.
}

Figure \ref{fig:volume-aversion} shows the daily volume of the internalisation-externalisation strategy in \eqref{eq: optimal strategy delta^pm} as a function of the inventory aversion parameter $\Phi$ for a {\toxichorizon} of 60s using {\ourmodel}.
This is a detailed version of what the left panel of Figure \ref{fig: app-int-vol-amb} describes.
Indeed, the higher the value of the inventory aversion parameter, the less the broker internalises.

\begin{figure}[H]
    \centering
    \includegraphics[width=0.6\linewidth]{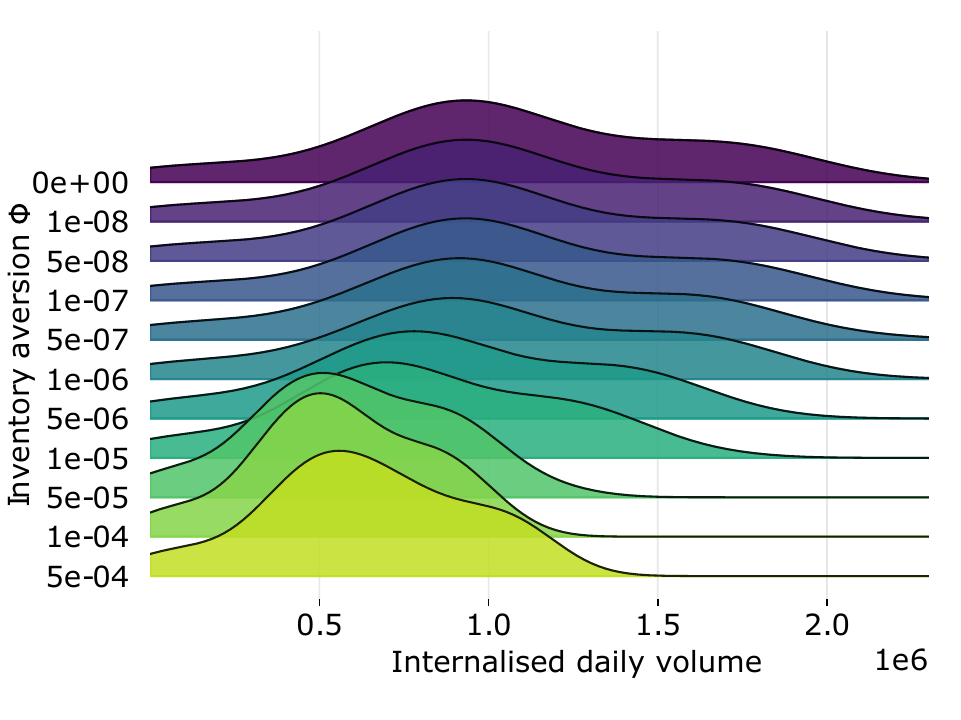}
    \caption{
    Kernel density estimate plot of daily traded volume for a range of values of the inventory aversion parameter $\Phi$.
    }
    \label{fig:volume-aversion}
\end{figure}

Finally, Figure \ref{fig:pct-int-vol-by-toxic-horizon-all}
shows the percentage of internalised volume as a function of the  {\cutoffprobability}
and the  {\toxichorizon}.
\begin{figure}[H]
    \centering
    \includegraphics[width=0.8\linewidth]{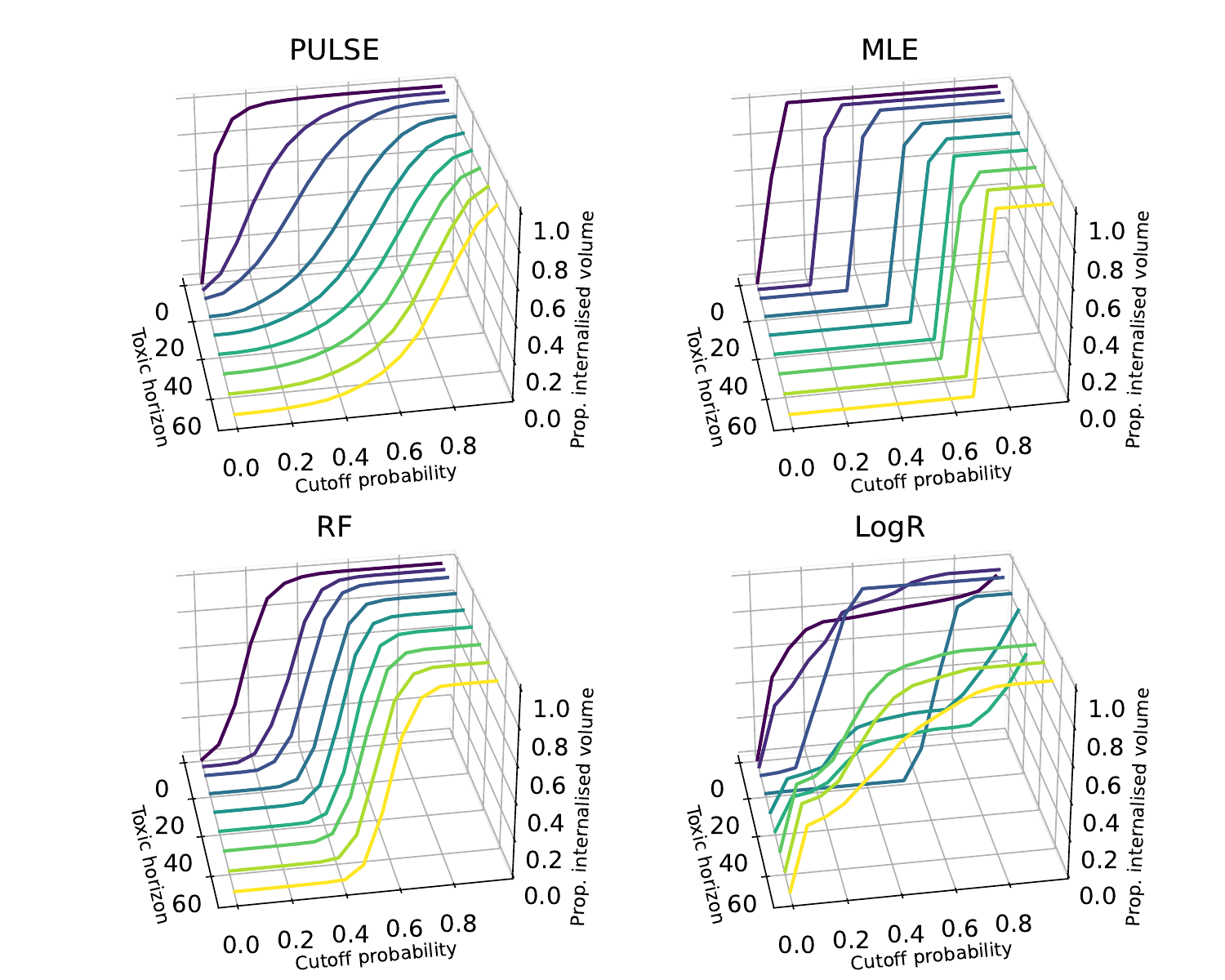}
    \caption{
        Proportion of internalised volume as a function of the  {\cutoffprobability} $\cutoff$
        and {\toxichorizon}.
        EUR/USD currency pair over the period 1 August 2022 to 21 October 2022.
    }
    \label{fig:pct-int-vol-by-toxic-horizon-all}
\end{figure}

\subsection{Performance for independent clocks}
\label{sec:independent-clock-performance}
In Subsection \ref{sec:features} we constructed 168 of the 175 features with three clocks: transaction-clock, time-clock, and volume-clock.
Here, we explore the AUC of our models 
when we use only one of the three clocks to construct the features, we omit MLE because its predictions are feature-free.
Recall that 168 out of the 183 features used in the calculations are measurements of 8 variables with three clocks and seven time horizons
($168 = 3\times 8\times 7$, see Subsection \ref{sec:features}).
Here, instead, we evaluate the model with 56 clock-features instead of 168, that is,
we use either
(i) the transaction-clock,
(ii) the time-clock,
or (iii) the volume-clock.
Table \ref{tab:auc-all} shows the AUCs, where
we observe that the models are robust to the choice of clock.

\begin{table}[H]
    \centering
\begin{tabular}{l|c|ccc}
  \noalign{\vskip 1mm}
  \hline\hline
  \noalign{\vskip 1mm}
  & all  & time & txn & vol  \\
 \noalign{\vskip1mm}
 \hline
 \noalign{\vskip-1mm}
 \noalign{\vskip 2mm}
PULSE & 62.5 & 62.6 & 62.6 & 62.6 \\ 
LogR & 50.0 & 50.0 & 50.0 & 50.0 \\
RF & 56.8 & 56.3 & 53.8 & 53.1 \\
\noalign{\vskip 1mm}
  \hline
  \hline
\end{tabular}
    \caption{AUC for {\toxichorizon} of 30s by model and by
    clock.
    EUR/USD currency pair over the period 1 August 2022 to 21 October 2022.
    Here,
    \textit{time} is the time-clock,
    \textit{txn} is the transaction-clock, and
    \textit{vol} is the volume clock.
    }
    \label{tab:auc-all}
\end{table}

We conclude that for {\ourmodel}, considering all three clocks does not add value. For RF on the other hand, the extra feature-engineering exercise does add value.

\subsection{The added value of one model per client}\label{app-section:value-of-single-model}
Here, we study how model predictions change when we fit a model per client as opposed to one model for all clients.
Figure \ref{fig:heatmap-accuracy-outperformance} shows the outperformance in AUC from an individual model over the universal model (one model for all clients).

\begin{figure}[H]
    \centering
    \includegraphics[width=0.7\linewidth]{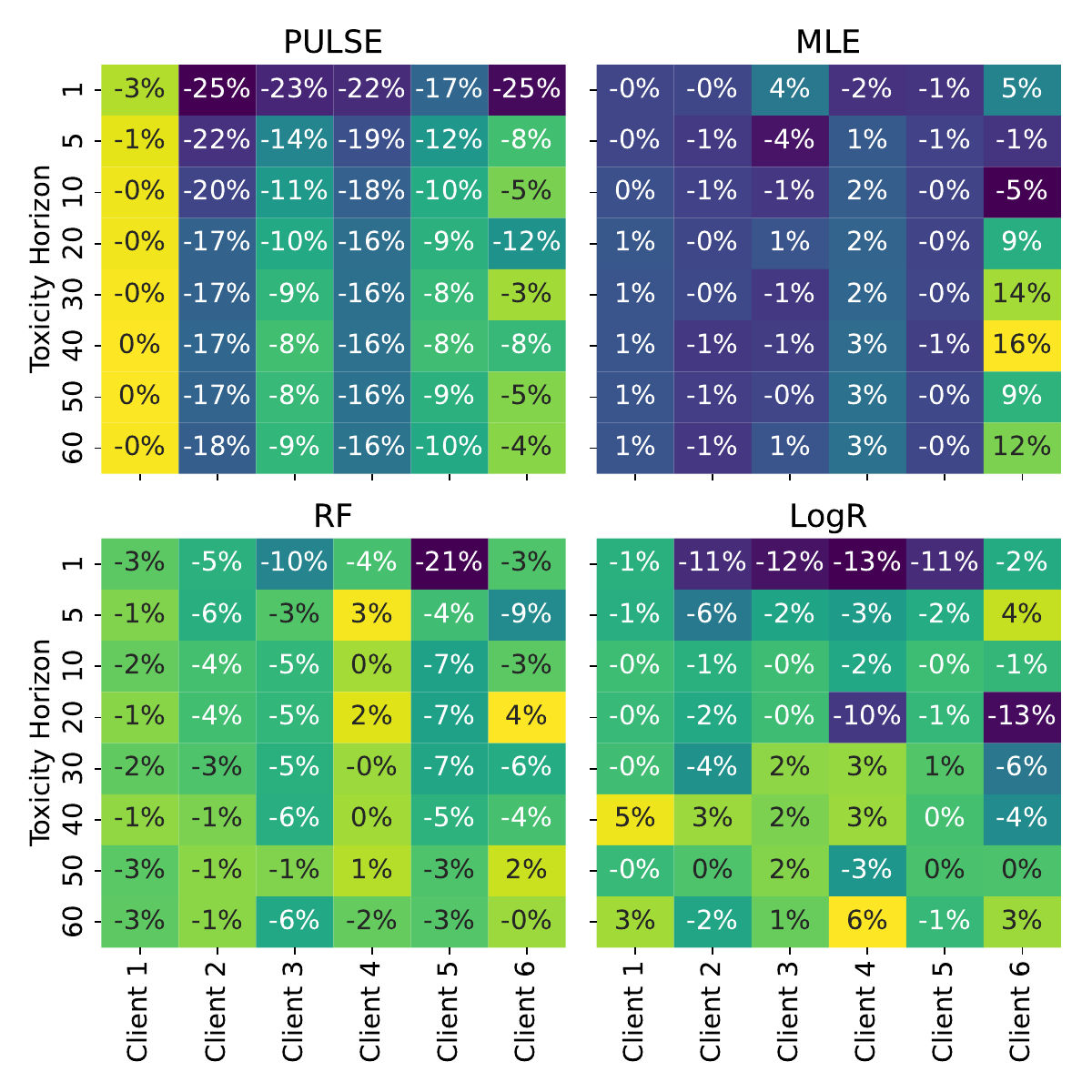}
    \caption{
        Ability to predict toxic flow.
        Difference between AUC measure without and with trader identification.
    }
    \label{fig:heatmap-accuracy-outperformance}
\end{figure}

For {\ourmodel}, with the exception of Client 1, the performance of one model per client is significantly lower than that of the universal model. Indeed, the additional data are more valuable than a model per client. 
The universal model employs features that are built using the identity of the client, e.g., inventory and cash of client. Thus, it is more advantageous to have one model for all clients and benefit from more data points than one model per client at the expense of having fewer data points to train the individual models. This is the case for {\ourmodel} and RF, whereas the results for LogR and MLE are not conclusive.

The above results are for a universal model that does employ client-specific features. If we consider a universal model without access to client-specific features (e.g., cash, inventory, and recent activity) the performance of the models deteriorates. For example, for Client 1 (using PULSE) we find close to 20\% decrease in the ability to predict toxic flow when going from a universal model with client-specific features to a universal model without client-specific features.

\subsection{Global models}
In this section we study the added value of employing data from more clients to build the models.
{
Figure~\ref{fig:auc-bday-btoxic-horizon-top100}
reports the median daily AUC across toxicity horizons under two training regimes:
(i) a single \emph{global} model and (ii) \emph{client-specific} models (one model per client).
The global model is fit on pooled data from the top 100 clients and excludes client-identifying features.
The client-specific models are trained separately for the top six clients and include client information features.
}

\begin{figure}[H]
    \centering
    \includegraphics[width=0.8\linewidth]{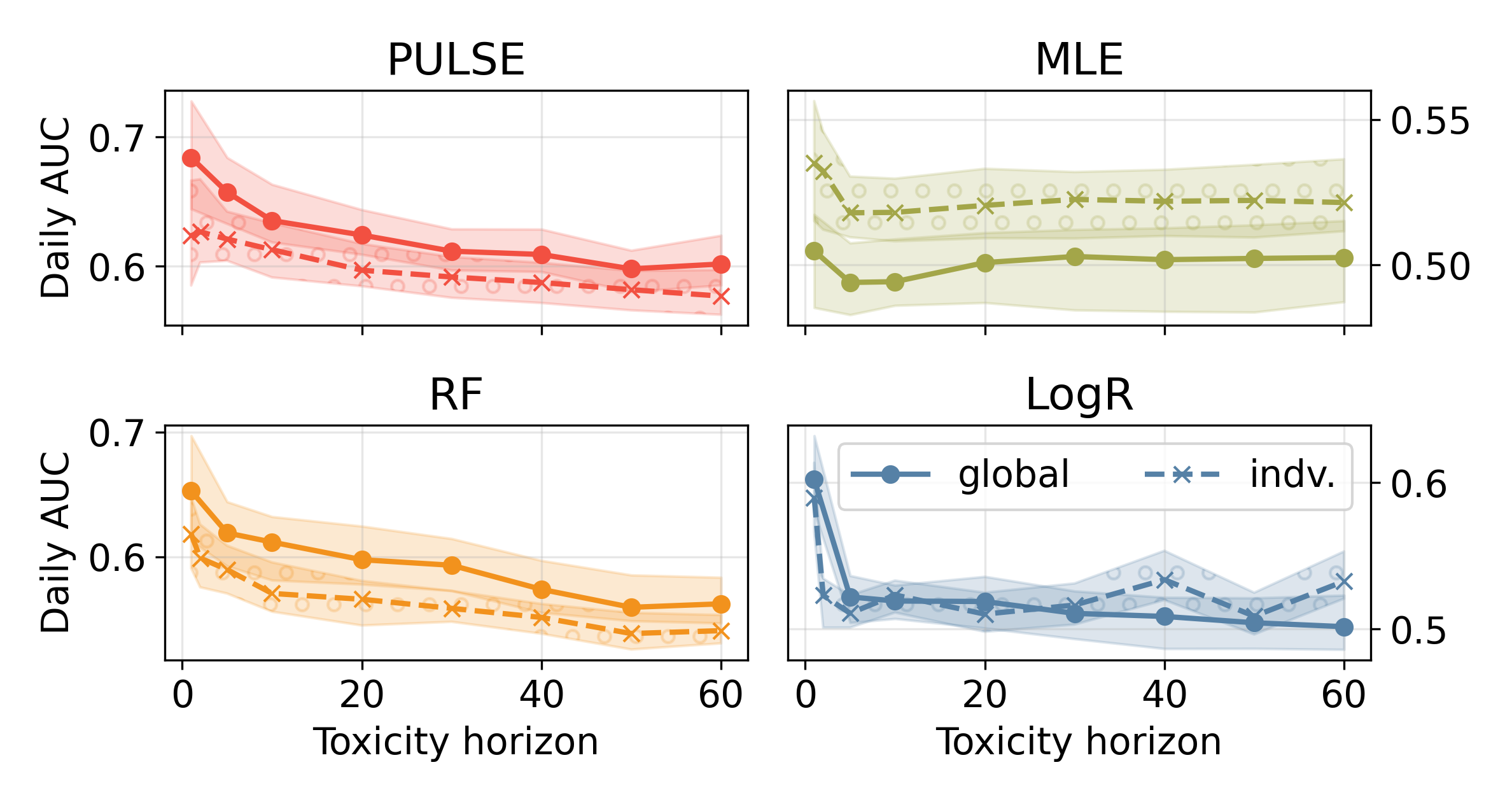}
    \caption{
        {
        Each panel shows daily AUC versus toxicity horizon for one method under two configurations:
        (i) per-client models and (ii) a single global model.
        Solid lines denote the global configuration; dashed lines denote per-client;
        the bands around each line indicate the interquartile range (IQR).
        EUR/USD, 1 Aug 2022–21 Oct 2022.
        }
    }
    \label{fig:auc-bday-btoxic-horizon-top100}
\end{figure}
{
Pooling data helps flexible models.
The mean daily AUC is higher for {\ourmodel} and RF when using a single \emph{global} model than when fitting one model per client---these
methods can exploit the larger sample and share statistical strength across clients.
LogR performs roughly on par in both settings, though the per-client variant shows higher variance due to smaller sample sizes.
MLE is the exception:
a per-client MLE (client-specific base rate) outperforms the global MLE,
since a global average dilutes client idiosyncrasies.
Overall, {\ourmodel} is the top performer, followed by RF
for both training regimes.
}
Thus, in our dataset, having more clients (more data) is advantageous.
This finding is specific to our dataset (it is not universal);
this might be a consequence of the pool of clients we study,
the features we consider, or
the period in our analysis.

\subsection{Weekly re-trains}
\label{sec:weekly-retrains}
In this section, we compare the performance of the NNet trained with {\ourmodel}
to the performance of the RF and LogR models with weekly re-training.
At the beginning of each week of the deploy phase,
we use the previous week's data to re-train RF and LogR.
For the first week in the deploy phase, we make predictions with data of the warmup phase.
Figure \ref{fig:models-with-weekly-retrain-comparison} shows the
five-day rolling AUC mean with $\valuetoxichorizon = 10s$.
\begin{figure}[htb]
    \centering
    \includegraphics[width=0.6\linewidth]{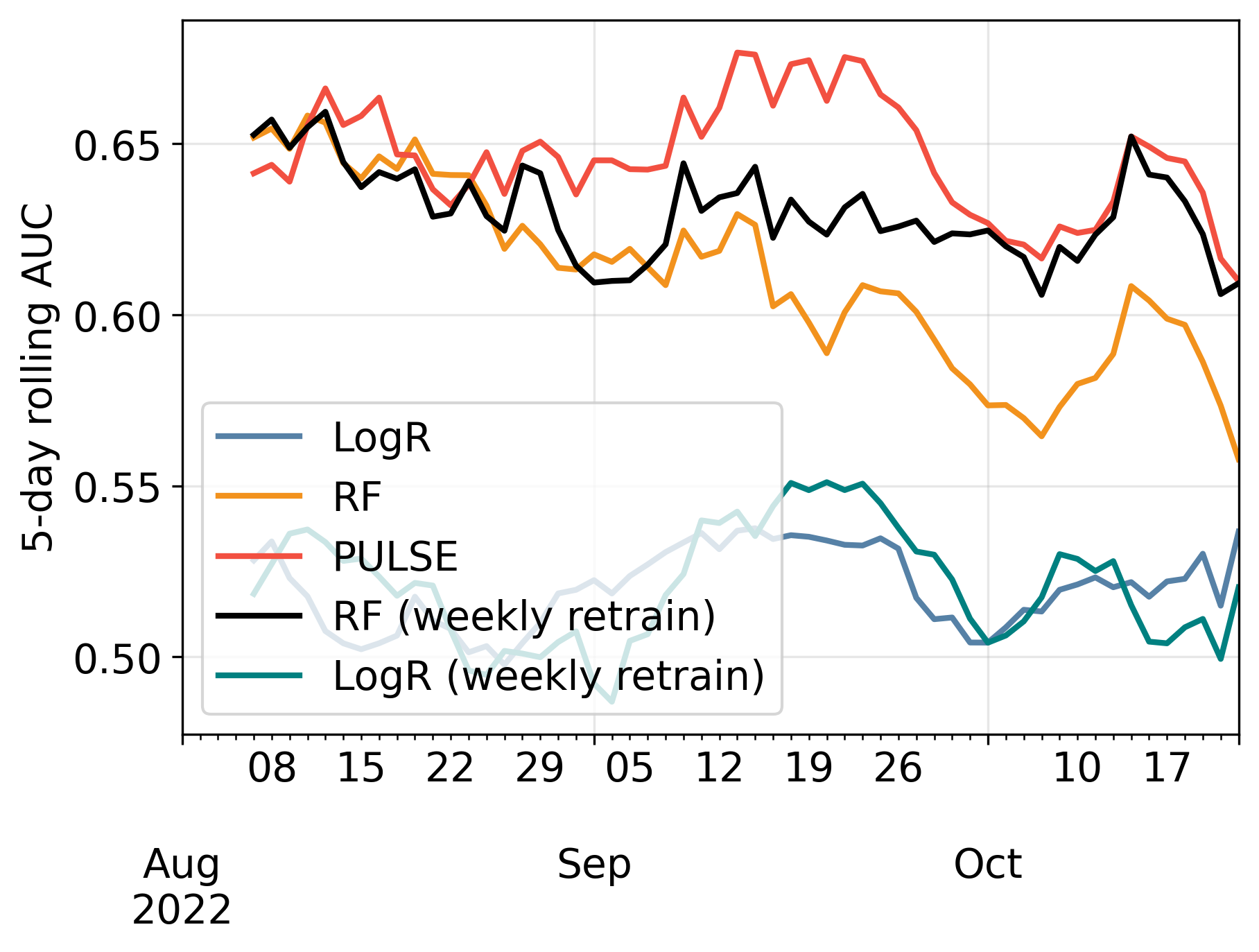}
    \caption{ Five-day exponentially-weighted moving average of AUC over time. The {\toxichorizon} is ten seconds.}
    \label{fig:models-with-weekly-retrain-comparison}
\end{figure}
The decrease in performance for RF and LogR with weekly retrain is less pronounced than when 
training only in the deploy phase.
As above, the mean AUC of RF and LogR with weekly re-retraining is lower than that of the NNet trained with {\ourmodel}.

The NNet trained with {\ourmodel} only observes each datapoint once, whereas 
RF keeps the data in memory  to re-train.
Thus, a NNet trained with {\ourmodel} is more memory-efficient  and more scalable than both RF and LogR.
Finally, given that {\ourmodel} is fully-online, it reacts more quickly to changes in the behaviour of 
clients.

\subsection{Variance of LogR}
\label{sec:variance-logR}
{
Here,  we investigate the variance we  observed in Figure \ref{fig:auc-bday-btoxic-horizon},
Figure \ref{fig:pct-int-vol-by-toxic-horizon-all}, and
 the concentrated probabilities shown in Figure \ref{fig:predicted-probas}.
}

{
Figure \ref{fig:normed-coeff-logr} shows the normalised coefficient magnitudes $|w_j|/\sum_k |w_k|$ for per-client LogR fits. 
We observe a marked concentration of mass: few coefficients (in some instances, one) dominate,
while the remainder are near zero, and the index of the dominant coefficient varies across clients. 
This reflects weak linear signal and collinearity in small per-client samples.
Consistent with this, LogR's predicted probabilities concentrate near a base rate,
yielding near-step internalised-volume curves and the horizon-to-horizon jaggedness noted in the main text.
}
\begin{figure}[H]
    \centering
    \includegraphics[width=0.6\linewidth]{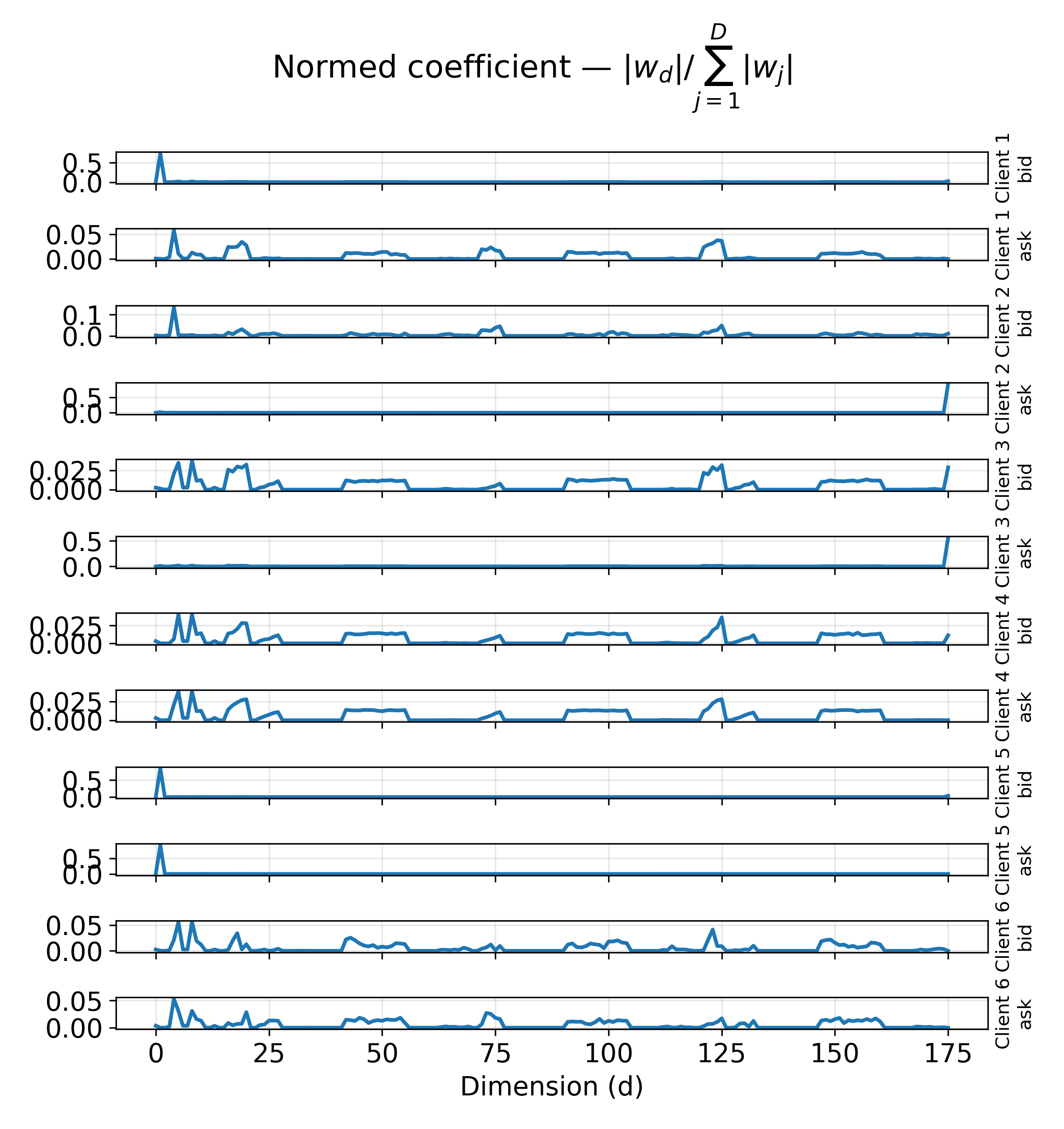}
    \caption{
        {
            Normalised coefficient magnitudes $|w_d|/\sum_j |w_j|$ for per-client LogR fits (top six clients). 
        }
    }
    \label{fig:normed-coeff-logr}
\end{figure}

\subsection{Precision and recall}\label{sec: precision recall}

{
Figure~\ref{fig:precision-recall} plots precision and recall versus the cutoff probability at a {\toxichorizon} of 20s.
LogR and MLE exhibit flat precision and rapidly declining recall, reflecting limited score dispersion.
RF and {\ourmodel} show clearer trade-offs, with precision increasing as the cutoff rises.
These patterns are consistent with the internalised-volume behaviour in Figure~\ref{fig:pct-int-vol} and with the AUC comparisons in Figure~\ref{fig:auc-bday-btoxic-horizon}.
}

\begin{figure}[H]
    \centering
    \includegraphics[width=0.7\linewidth]{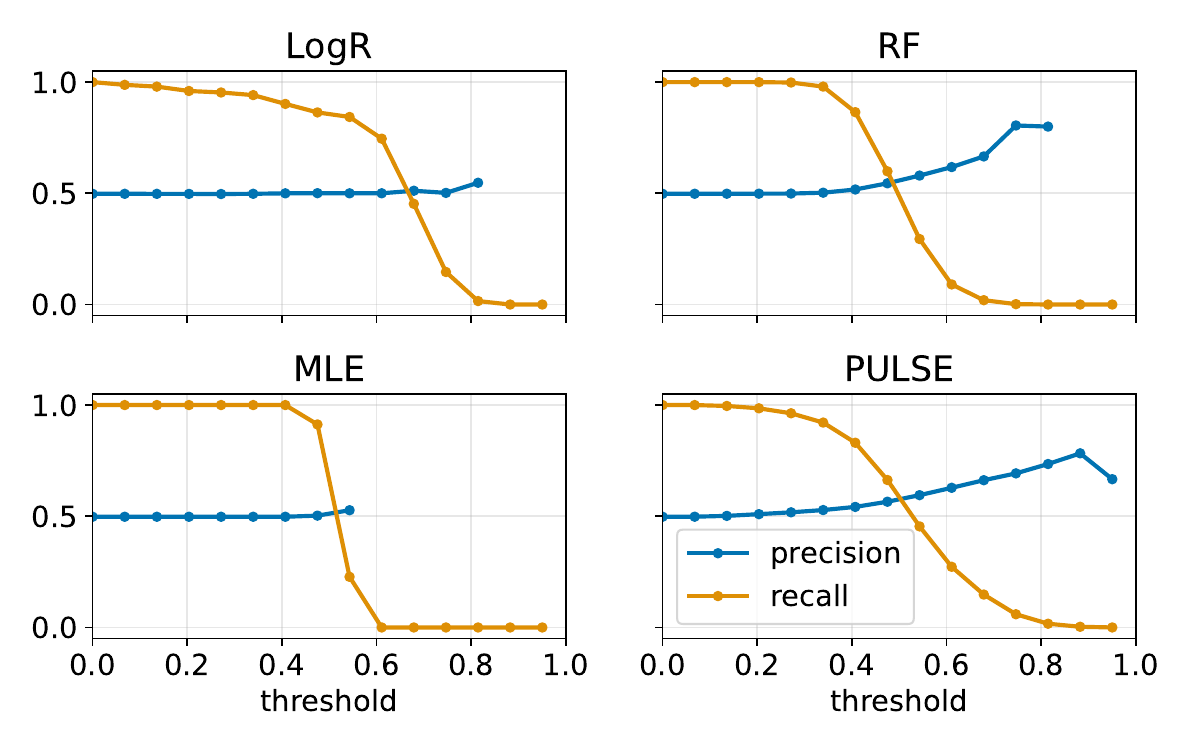}
    \caption{
    Precision and recall as a function of the cuttoff probability \cutoffprobability at a 20s toxicity horizon, by model.
    }
    \label{fig:precision-recall}
\end{figure}

\end{document}